\newtheorem{mydef}{Definition}
\newtheorem{myex}[mydef]{Example}
\newtheorem{mythm}[mydef]{Proposition}
\newtheorem{mylem}[mydef]{Proposition}
\newtheorem{myprop}[mydef]{Proposition}
\newcommand{\nata}{\mathit{Nat}}
\setlist{noitemsep}
\newcommand\reallywidehat[1]{%
\savestack{\tmpbox}{\stretchto{%
  \scaleto{%
    \scalerel*[\widthof{\ensuremath{#1}}]{\kern-.6pt\bigwedge\kern-.6pt}%
    {\rule[-\textheight/2]{1ex}{\textheight}}
  }{\textheight}%
}{0.5ex}}%
\stackon[1pt]{#1}{\tmpbox}%
}
\date{}
\begin{document}
\title{The far side of the cube}
\subtitle{An elementary introduction to game semantics}
\author{
  Dan R. Ghica~ \\
  University of Birmingham~ }
\maketitle

\begin{abstract}
Game-semantic models usually start from the core model of the prototypical language PCF, which is characterised by a range of combinatorial constraints on the shape of plays. Relaxing each such constraint usually corresponds to the introduction of a new language operation, a feature of game semantics commonly known as the \emph{Abramsky Cube}. In this presentation we relax all such combinatorial constraints, resulting in the most general game model, in which all the other game models live. This is perhaps the simplest set up in which to understand game semantics, so it should serve as a portal to the other, more complex, game models in the literature. It might also be interesting in its own right, as an extremal instance of the game-semantic paradigm. 
\end{abstract}

\section{Game semantics and definability}

\begin{quotation}
  \em{Thus we begin to develop a semantic taxonomy of constraints on strategies mirroring the presence or absence of various kinds of computational features.}
\end{quotation}
\begin{flushright}
  \citep{DBLP:journals/entcs/AbramskyM96}
\end{flushright}

A \emph{denotational semantics} models a programming language by translating it into a mathematical \emph{semantic domain}. This approach was pioneered by~\cite{ScottS71}, motivated by reasoning about program and compiler correctness. For mathematical and philosophical reasons it makes sense to define this translation function \emph{compositionally} from the structure of the language syntax. This translation function will not be, except in the most trivial cases, an \emph{injection} since the same semantic concept can have multiple syntactic representations. This is part of the challenge and attraction of understanding languages. On the other hand, also for mathematical and philosophical reasons, it is preferable if the translation function is a \emph{surjection}, meaning that every semantic concept has a syntactic representation, as pointed out by~\cite{lawvere1969adjointness}. Informally speaking, this simply means that there is no `\emph{junk}' in the semantics or, conversely, that there are no missing elements in the syntax. This property is called \emph{definability}.

There are other, more basic, requirements that a denotational semantics must meet. It must be \emph{sound}, meaning essentially that distinct syntactic entities are not wrongly identified (for example \emph{true} and \emph{false}, or 1 and 0), and it must be \emph{adequate} meaning that \emph{terminating} and \emph{non-terminating} programs are not mistakenly identified. 
Definability, soundness, and adequacy together establish that the translation is mathematically precise: it will translate all and only terms which are \emph{equivalent} in the syntax into \emph{equal} mathematical objects. This ideal situation is called \emph{full abstraction}. And, as many ideal situations, it turns out to be difficult to achieve, primarily because of failure of definability. This was first pointed out  by~\cite{DBLP:journals/tcs/Plotkin7} in the case of PCF, a simple yet surprisingly challenging functional language.

The failure of full abstraction indicates a mismatch between syntax and the semantic domain, which can be resolved in two ways. The first one is to enrich the syntax with the missing operations, a course of action taken in \emph{loc.\ cit.}, to wit, by adding a `parallel-or' operator. If we think of syntax as mere notation for semantic concepts, which we may hold as essential, this would seem the right course of action. There is however a second way to mend the gap, by removing from the semantics those objects that have no syntactic expression. This may seem a somewhat surprising concession to the preeminence of syntax, but it is more than that. Solving full abstraction, which means solving definability, is a challenging litmus test to the power of a semantic methodology. This difficult mathematical problem, in the context of any non-trivial (and not contrived) programming language, remained open for about two decades. \cite{DBLP:journals/iandc/HylandO00} give an excellent scholarly account of the quest for answering this question (Sec.~1.3, \emph{loc. cit.}). 

The solution to the problem of definability was brought about by \emph{game semantics}. For a tutorial introduction, history and overview of the subject the reader is referred to~\cite{samsong99,DBLP:conf/lics/Ghica09,DBLP:journals/ftpl/MurawskiT16}. Of particular interest to us is the~\cite{DBLP:journals/iandc/HylandO00} model, which along with~\cite{DBLP:journals/corr/AbramskyJM13}, is one of the original game-semantic models for PCF which achieves definability. It also introduces a style of game semantics, based on so-called \emph{pointer sequences}, which proved to be very successful because of its flexibility. Using this style of game semantics,~\cite{DBLP:journals/entcs/AbramskyM96} gave the first fully abstract model of \cite{reynolds81}'s intensely studied functional-imperative language \emph{Idealised Algol} (IA) --- \cite{o2013algol} collects these and other key papers on the semantics this language. 

The relation between PCF and IA is a very interesting one. Syntactically and operationally IA is a superset of PCF, to which it adds \emph{local state}. Despite this close connection, denotational models of IA differed significantly from those of PCF in terms of their mathematical structure. It was considered essential that the structure of the semantic domain mimics the structure of the store, something that was postulated by \cite{reynolds81} as one of the basic principles of the language: ``\emph{5. The language should obey a stack discipline, and its definition should make this discipline obvious.}'' Here `obvious' means that it should be part of the domain equations. This imperative led \cite{Oles_1983} to formulate an influential model based on functor categories. \cite{DBLP:journals/lisp/TennentG00} give a survey of the evolution of IA models.

However, since IA lives inside PCF it was likely that the full abstraction problems for the two are connected. And, indeed, the fully abstract model of IA followed shortly that of PCF. Even though the IA model is less celebrated than that for PCF, which solved a long-standing open problem of high stature, two of its features foreshadowed the coming success and dominance of the game-semantic methodology.

The first achievement of the IA model was rather technical. Part of the methodology of denotational semantics mandates that \emph{equivalent} syntactic phrases are mapped into \emph{equal} mathematical objects. This was achieved by~\cite{DBLP:journals/tcs/Milner77} using so-called \emph{term-model constructions}, starting from the syntax and applying quotients. But such models do not make semantic reasoning any easier. They are a form of sweeping under the rug. Game models for PCF are not syntactic, but they use a form of quotienting which was found by some to be objectionable, although the objections were largely expressed in the form of pub conversations rather than in formal publications. Moreover, as~\cite{DBLP:journals/tcs/Loader01} showed soon after, term equivalence for PCF is not decidable, so it was unlikely that the semantic domain of PCF was going to consist of neat mathematical objects. The IA game model put this debate to rest by providing a language interpretation in which no quotienting is required and thus eliminating a significant, if somewhat obscure, objection to games-based models.

The second achievement of the IA model was more subtle but at the same time more consequential. The model of IA was, in some sense, as close to the model of PCF as the syntax of IA is close to that of PCF. Both are interpreted in, essentially, the same semantic domain and the difference is a mere tweak. Even though the IA game model was foreshadowed by some earlier models, such as the object model of~\cite{DBLP:journals/lisp/Reddy96}, the similarity between it and that of PCF was striking, and it suggested that small tweaks to the game model can lead to models for diverse languages, starting from a common fundamental game model. The final paragraph of the paper (the version which appeared as a part of~\cite{o2013algol}) states that:
\begin{quotation}
  Another point for further investigation is suggested by the following diagram:
  \[
    \xymatrix{
      & \lnot I \land \lnot B & \\
      I\land \lnot B \ar[ur] & & \lnot I \land B\ar[ul]\\
      & I \land B\ar[ul]\ar[ur]&
    }
  \]
  Here $I$ denotes innocence and $B$ the bracketing condition [{\ldots}] and very successfully capture pure functional programming. As we have seen in the present paper, the category of knowing (but well bracketed) strategies captures IA. If we conversely retain innocence but weaken the bracketing condition then we get a model of PCF extended with non-local control operators. \em{Thus we begin to develop a semantic taxonomy of constraints on strategies mirroring the presence or absence of various kinds of computational features.}
\end{quotation}

The `\emph{innocence}' and `\emph{bracketing}' conditions mentioned above are the relatively small adjustments that the PCF game model requires in order to lead to full abstraction for other languages. The lattice of conditions above subsequently received new dimensions, and was dubbed a \emph{cube} by~\cite{samsong99}, which was then commonly referred to as `\emph{Abramsky's Cube}'. Exploring the various vertices of this (hyper)cube led to the development of many interesting and useful semantic models. Even though other methods such as \emph{trace semantics} also led to the development of fully abstract models for non-trivial languages~\citep{DBLP:conf/esop/JeffreyR05} it is fair to say that ultimately game semantics became the dominant paradigm, thanks in no small part to the guidance and inspiration provided by the Abramsky Cube. 

\subsection*{Beyond the cube, beyond definability}

The methodology of game semantics was naturally guided by its history, energised by the quest for PCF definability. This meant that the first game semantic model, that of PCF, was also in some sense the most highly constrained game semantic model. Other models are then derived by relaxing some of the constraints. This is perhaps paradoxical: Why does the simplest language (PCF) have the most complicated model? This is, again, because of definability. In a simple language relatively few semantic objects are definable. The constraints on the model are intended to rule out certain objects by deeming them to be `\emph{illegal}'. It is the remaining, legal, ones which are syntactically definable. As the language becomes richer, some of these semantic constraints can be relaxed. But this should lead to an obvious question: What if we relax all the constraints? Or, rather, what if we relax all the constraints that we can relax without making the model fall apart? What model lies at the top of the cube (or lattice, rather) of constraints. This is what the current essay will attempt to answer.

Since the game model on display here is simple, we aim for this to be a self-contained, accessible, and elementary introduction to game semantics. This presentation will be done in the style of \cite{DBLP:journals/entcs/GabbayG12} which will allow us to streamline some of the basic proofs of properties of game semantics. Arguably, the model we present here can be seen as the ur-model, at least for \emph{call-by-name} programming languages. Understanding it should give an easier access ramp to the rich, diverse, and mathematically sophisticated world of game semantics. 

\paragraph{Acknowledgments} Much of this material represents a tidying up of notes for courses taught at research summers schools, in particular the JetBrains Summer School, St.\ Petersburg (2016) and the Oregon Programming Languages Summer School (2018). 

\section{Game semantics, an interaction semantics}

\subsection{Arenas, plays, strategies}

The terminology of game semantics guides the intuition towards the realm of game theory. Indeed, there are methodological and especially historical connections between game semantics and game theory, but they are neither direct nor immediate. The games involved are rooted in logic and reach programming languages via the Curry-Howard nexus. They are not essential for a working understanding of game semantics as a model of programming languages, so we will not describe them here. But if we were to be pushed hard  to give a game-theoretic analogy, the ones to keep in mind are not the quantitative games of economics but rather last-player-wins games such as Nim. 

It is more helpful to think of game semantics as an interaction, or dialogue, between two agents, rather than a game. The dialogue is between a \emph{term} $t$, i.e.\ a piece of programming language code, and its \emph{context} $\mathcal C[-]$, i.e.\ the rest of the code. By placing the term in context we create an executable program $\mathcal C[t]$. During execution, certain interactions such as function calls and returns, or variable access, will happen. These are the interactions that are organised into a game model.

This interaction is asymmetric. One agent (P) represents the term and the other (O) represents an abstract and general context in which the term can operate.\footnote{The names stand for `\emph{Proponent}' and `\emph{Opponent}' even though there is nothing being proposed, and there is no opposition to it. The names are historical artefacts. We might as well call them `\emph{Popeye}' and `\emph{Olive}'. Same applies to `move', `play', and `strategy'.} The interaction consists of sequences of events called \emph{moves}, which can be seen as either calls, called \emph{questions}, or returns, called \emph{answers}. A sequence of events, with some extra structure to be discussed later, is called a \emph{play} and it corresponds to the sequence of interactions between the term and the context in one given program run. The set of all such possible plays, for all possible contexts, is called a \emph{strategy} and it gives the \emph{interpretation} of the term. The strategy of a term can be constructed inductively on its syntax, from basic strategies for the atomic elements and a suitable notion of composition to be discussed later.

Before we proceed, a caveat. The structure of a game semantics is dictated by the evaluation strategy of the language and its type structure. Call-by-name games are quite differently structured than call-by-value games. Hereby we shall assume a call-by-name evaluation strategy and simple type discipline of base types and functions. The reason is didactic, as these games are easier to present. Having understood game semantics in this simple setting, understanding other more complex setups should be easier. 

Let us consider a most trivial example, the term consisting of the constant 0. The way this term can interact with any context is via two moves: a question ($q$) corresponding to the event interrogating the term, and an answer ($0$) corresponding to the term communicating its value to the context. The sequence $q\cdot 0$ is the only possible play, therefore the strategy corresponding to the set of plays $\{q\cdot 0\}$ is the interpretation of the term 0.

\begin{center}
  \includegraphics[scale=1.2]{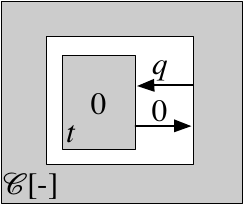}
\end{center}

This behaviour is `at the interface' in the sense that any other term evaluating to 0, such as $7-7$ or $4\times 2 -8$ would exhibit the same interaction. 

Let us consider a slightly less trivial example, the identity function over natural numbers $\lambda x.x : nat \rightarrow nat$.  The context can call this function, but also the function will enquire about the value of its argument $x$. Lets call these questions $q$ and $q'$. The context can answer to $q'$ with some value $n$ and the term will answer to $q$ with the same value $n$. Even though the answers carry the same value they are different moves, so we will write $n$ and $n'$ to distinguish them, where then prime is a syntactic tag. Plays in the strategy interpreting the identity over natural numbers have shape $q\cdot q'\cdot n'\cdot n$. Equivalent terms such as $(\lambda x.x)(\lambda x.x)$ exhibit identical interactions.

\begin{center}
  \includegraphics[scale=1.2]{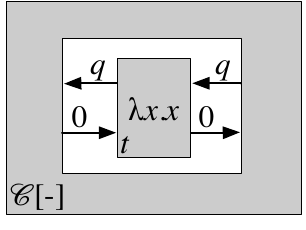}
\end{center}

Let us now define the concepts more rigorously.

\begin{mydef}[Arena]
  An \emph{arena} is a tuple $\langle M, Q, I, O, {\vdash}\rangle$ where
  \begin{itemize}
  \item $M$ is a set of \emph{moves}.
  \item $Q\subseteq M$ is a set of \emph{questions}; $A=M\setminus Q$ is the set of \emph{answers}.
  \item $O\subseteq M$ is a set of \emph{O-moves}; $P=M\setminus O$ is the set of \emph{P-moves}.
  \item $I\subseteq Q\cap O$ is a set of \emph{initial moves};
  \item ${\vdash}\subseteq M\times M$ is an \emph{enabling} relation such that if $m\vdash n $ then 
    \begin{itemize}
    \item[(e1)] $m\in Q$
    \item[(e2)] $m\in O$ if and only if $n \in P$
    \item[(e3)] $n \not\in I$.
    \end{itemize}
  \end{itemize}
\end{mydef}
An arena represents the set of moves associated with a type, along with the structure discussed above (questions, answers, O, P). Additionally, the arena introduces the concept of \emph{enabling} relation, which records the fact that certain moves are causally related to other moves. Enabling requires certain preliminary conditions:
\begin{itemize}
\item[(e1)] Only questions can enable other moves, which could be interpreted by the slogan `\emph{all computations happen because a function call}'. 
\item[(e2)] $P$-moves enable $O$-moves and \emph{vice versa}. Game semantics records behaviour \emph{at the interface} so any action from the context enables an action of the term, and the other way around. 
\item[(e3)] There is a special class of O-questions called \emph{initial moves}. These are the moves that are allowed to kick off an interaction, so do not need to be enabled. 
\end{itemize}
The informal discussion above can be made more rigorous now. 
\begin{myex}
  Let $\mathbf 1=\{ \star \}$. 
  The arena of natural numbers is $N =\langle \mathbf 1\uplus\mathbb N, \mathbf 1, \mathbf 1, \mathbf 1, \mathbf 1 \times \mathbb N \rangle$. 
\end{myex}
More complex arenas can be created using product $\times$ and arrow $\Rightarrow$ constructs. Let   
\begin{align*}
  inl &: M_A\rightarrow M_A+M_B\\
  inr &: M_B\rightarrow M_A+M_B
\end{align*}
where $+$ is the co-product of the two sets of moves. We lift the notation to relations, $R+R'\subseteq (A+A')\times(B+B')$:
\begin{align*}
  inl(R) &= \{(inl(m), inl(n)) \mid (m, n)\in R\}\\
  inr(R') &= \{(inr(m), inr(n)) \mid (m, n)\in R'\}.
\end{align*}

\begin{mydef}[Arena product and arrow]
  Given arenas $A=\langle M_A, Q_A, O_A, I_A, {\vdash}_A\rangle$ and $B=\langle M_B, Q_B, O_B, I_B, {\vdash}_B\rangle$ we construct the \emph{product arena} as
  \[
  A\times B =\bigl\langle
  M_A+ M_B,
  Q_A +Q_B,
  O_A +O_B,
  I_A +I_B,
  {\vdash}_A+{\vdash}_B
  \bigr\rangle
  \]
  and the \emph{arrow arena} as
  \begin{align*}
    A\Rightarrow B = \bigl\langle
    M_A+M_B,
    Q_A+Q_B,
    P_A+O_B,
    inr(I_B),
    {\vdash}_A+{\vdash}_B\cup inr(I_B)\times inl(I_A)
    \bigr\rangle.
  \end{align*}
\end{mydef}
If we visualise the two arenas as DAGs, with the initial moves as sources and with the enabling relation defining the edges, then the product arena is the disjoint union of the two DAGs and the arrow arena is the grafting of the $A$ arena at the roots of the $B$ arena, but with the O-P polarities reversed. 

Since arenas will be used to interpret types we can anticipate by noting that
\begin{mythm}[Currying]
For any arenas $A,B,C$ the arenas $A\times B\Rightarrow C$ and $A\Rightarrow B\Rightarrow C$ are isomorphic. 
\end{mythm}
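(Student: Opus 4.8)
The plan is to exhibit an explicit bijection on move-sets and check that it respects all the arena structure. Since both arenas are built, up to reassociation, from the same three move-sets $M_A$, $M_B$, $M_C$, the natural candidate is the associativity isomorphism of the coproduct,
\[
\alpha : (M_A + M_B) + M_C \;\longrightarrow\; M_A + (M_B + M_C),
\]
sending $inl(inl(a))\mapsto inl(a)$, $inl(inr(b))\mapsto inr(inl(b))$, and $inr(c)\mapsto inr(inr(c))$. First I would fix the notion of arena isomorphism to mean a bijection on moves that restricts to bijections on $Q$, on $O$, and on $I$ (hence also on the complements $A$ and $P$), and that carries $\vdash$ onto $\vdash$. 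It then suffices to unfold both arenas by the product and arrow definitions and verify that $\alpha$ meets these conditions componentwise.

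Next I would compute the two arenas side by side. Unfolding the definitions, the moves are $(M_A+M_B)+M_C$ on the left and $M_A+(M_B+M_C)$ on the right; the questions are $(Q_A+Q_B)+Q_C$ versus $Q_A+(Q_B+Q_C)$; and the initial set is $inr(I_C)$ on the left, which $\alpha$ carries to the right-hand initial set $inr(inr(I_C))$. For each of these, agreement under $\alpha$ is immediate from the definition of the associator, and preservation of answers and of P-moves then follows by taking complements. The one check worth pausing on is the set of O-moves: on the left it is $(P_A+P_B)+O_C$, while on the right it is $P_A+(P_B+O_C)$, since the codomain $B\Rightarrow C$ already contributes O-moves $P_B+O_C$. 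These coincide under $\alpha$ precisely because each argument arena, $A$ and $B$, sits under exactly one arrow on both sides and so has its polarity flipped exactly once, while $C$ sits under no arrow and keeps its polarity; this is the conceptual heart of the statement, even though the verification itself is a one-line comparison.

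The remaining, and most laborious, step is the enabling relation, which is where I expect the only real bookkeeping difficulty. On the left $\vdash$ is $(\vdash_A+\vdash_B)+\vdash_C$ together with the cross-edges $inr(I_C)\times inl(I_A+I_B)$ introduced by the outer arrow. On the right it is $\vdash_A+(\vdash_B+\vdash_C)$ together with two layers of cross-edges: the inner ones $inr(I_C)\times inl(I_B)$ coming from $B\Rightarrow C$, and the outer ones $inr(inr(I_C))\times inl(I_A)$ coming from the second arrow. I would push $\alpha$ through each family of edges and check that the left-hand cross-edges, which say that every initial move of $C$ enables every initial move of $A$ and of $B$, split exactly into the outer cross-edges (the $A$ part) and the inner cross-edges (the $B$ part) on the right, while the three ``internal'' families $\vdash_A,\vdash_B,\vdash_C$ match one-for-one. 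Since $\alpha$ is a bijection and each containment established above is in fact an equality of relations, the inverse $\alpha^{-1}$ automatically witnesses the reverse direction, so no separate argument is needed there, completing the isomorphism.
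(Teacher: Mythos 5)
Your proposal is correct and takes essentially the same route as the paper: the paper's proof also uses the associativity isomorphism of the coproduct (presented as a node-relabeling isomorphism of a single shared DAG), and your component-by-component verification of $Q$, $O$, $I$ and of the enabling relation, including the splitting of the left-hand cross-edges $inr(I_C)\times inl(I_A+I_B)$ into the outer ($A$) and inner ($B$) cross-edges on the right, is simply the explicit bookkeeping the paper compresses into its figure. No gaps; your expanded check of the O-move polarities and the $\vdash$ cross-edges is exactly what makes the paper's one-line claim rigorous.
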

\begin{proof}
  Both arena constructions correspond to the DAG below.
  \begin{center}
  \includegraphics[scale=1.2]{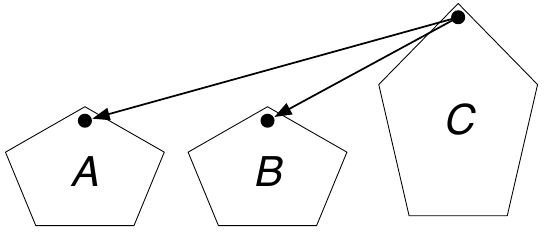}
  \end{center}
  The isomorphism is a node-relabeling isomorphism induced by the associativity isomorphism of the co-product. 
\end{proof}
We also note that
\begin{mythm}[Unit]
The arena $I = \langle \emptyset,\emptyset, \emptyset,\emptyset \rangle$ is a unit for product, i.e. for any arena $A$, $A\times I$, $I\times A$, $I\Rightarrow A$ are isomorphic to $A$. 
\end{mythm}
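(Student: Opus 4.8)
The plan is to reduce every case to the single fact that the empty set is a two-sided unit for the coproduct of sets, and then to check that this set-level isomorphism lifts through the arena structure. For any set $M$ the left injection is a bijection $M + \emptyset \cong M$ and the right injection is a bijection $\emptyset + M \cong M$, and under these bijections a lifted relation $R + \emptyset$ (respectively $\emptyset + R$) is carried exactly to $R$. Exactly as in the proof of the Currying proposition, an arena isomorphism here is a node-relabeling: a bijection on moves that matches questions with questions, O-moves with O-moves, initial moves with initial moves, and enabling with enabling. So in each case it suffices to exhibit one such set bijection on moves and to observe that it transports all five slots correctly.

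First I would treat $A \times I$, writing the unit as $I = \langle \emptyset, \emptyset, \emptyset, \emptyset \rangle$ with empty enabling relation. Unfolding the product, every slot of $A \times I$ has the form $X_A + \emptyset$ (moves, questions, O-moves, initial moves) and its enabling is ${\vdash}_A + \emptyset$. The bijection $M_A + \emptyset \cong M_A$ induced by $inl$ then sends each slot to the corresponding slot of $A$, so all four structural predicates and the enabling relation transport correctly, giving $A \times I \cong A$. The case $I \times A$ is symmetric, now using the bijection $\emptyset + M_A \cong M_A$ induced by $inr$.

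The only case needing attention is $I \Rightarrow A$, since the arrow construction differs from the product in two of its slots. Unfolding it, the moves are $\emptyset + M_A$ and the questions $\emptyset + Q_A$, but the O-moves are $P_I + O_A$, the initial moves are $inr(I_A)$, and the enabling is ${\vdash}_I + {\vdash}_A \cup inr(I_A) \times inl(I_I)$. Here I would record two degeneracies: first, $P_I = M_I \setminus O_I = \emptyset$, so the O-move slot is again $\emptyset + O_A$; second, the grafting edges $inr(I_A) \times inl(I_I)$ vanish, because the initial-move set $I_I$ of the unit arena is empty, hence $inl(I_I) = \emptyset$ and the product is empty. The enabling therefore collapses to $\emptyset + {\vdash}_A$, and the same right-injection bijection $\emptyset + M_A \cong M_A$ (under which $inr(I_A)$ becomes $I_A$) carries $I \Rightarrow A$ onto $A$.

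I do not expect a genuine obstacle: the whole statement is a unit-law bookkeeping exercise once the coproduct unit laws are in hand. The one place to be careful — and the nearest thing to a subtlety — is the arrow case, where one must confirm that both the polarity-flipping slot $P_I$ and the grafting term degenerate to the empty set, the former because $M_I$ and $O_I$ are empty and the latter because $I_I$ is empty, so that $I \Rightarrow A$ really does reduce to the trivial coproduct with the empty arena.
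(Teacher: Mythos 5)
Your proof is correct and takes essentially the same approach as the paper, which disposes of this proposition with the single remark that ``the isomorphism is a re-tagging of moves'': your argument is precisely that re-tagging made explicit, via the coproduct unit bijections $M_A+\emptyset\cong M_A\cong \emptyset+M_A$ transported slot by slot. Your treatment of the arrow case --- observing that $P_I=M_I\setminus O_I=\emptyset$ and that the grafting term $inr(I_A)\times inl(I_I)$ vanishes because $I_I=\emptyset$ --- supplies exactly the bookkeeping the paper leaves to the reader, and it is the one place where the check is not purely formal.
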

The isomorphism is a re-tagging of moves. 

\begin{myex}
  We talked earlier about the arena for the type $nat\rightarrow nat$. Let $inl(m)=m''$ and $inr(m)=m'$, where $'$ and $''$ are syntactic tags. The arena $\mathit{Nat}\Rightarrow \mathit{Nat}$ is represented by the DAG below
  \begin{center}
    \includegraphics[scale=1.2]{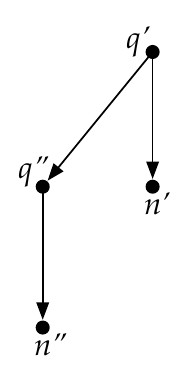}
  \end{center}
\end{myex}
We already mentioned that for the identity all plays have the shape $q'q''n''n'$. We note that in this particular play all move occurrences are preceded by an enabling move. The move corresponding to the term returning a value $n'$ can happen because the context initiated a play $q'$. The term can ask for the value of the argument $q''$ also because $q$ has happened earlier. Each move occurrence is \emph{justified} by an enabling move, according to $\vdash$, occurring earlier. The enabling relation defines the causal skeleton of the play. 

Let us further consider another term in the same arena $\lambda x.x+x$. How does this term interact with its context?
\begin{enumerate}
\item the context initiates the computation
\item the term asks for the value of $x$
\item the contest returns some $m$
\item the term asks again for the value of $x$
\item the context returns some $n$
\item the term returns to the context $m+n$.
\end{enumerate}
The reader familiar with call-by-value may be rather confused as to why the context returns first an $m$ and then an $n$. This is because in call-by-name arguments are thunks. In some languages the thunks may contain side-effects, which means that repeat evaluations may yield different values.

\newcommand{\cobnd}[1]{\langle #1 \rangle}

Looking at the arena, this interaction corresponds to the play $q'q''m''q''n''p'$, where $p=m+n$. The causal structure of this play is a little confusing. There are two occurrences of $q''$, the first one preceding both $m''$ and $n''$ and the second one only $n''$. It should be that the first occurrence of $q''$ enables $m''$ and the second enables $n''$, to reflect the proper call-and-return we might expect in a programming language. In order to do that the plays will be further instrumented with names called \emph{pointers}. Each question has a symbolic address, and is paired with the address of some other move, called \emph{enabling move}.. The fully instrumented play is called a \emph{pointer sequence}.

\newcommand{\ssin}{\,{\sqsubset}\hspace{-1.8ex}{-}\,}

Let us use $\epsilon$ for the empty sequence, $\cdot$ for sequence concatenation, $\sqsubseteq$ for sequence prefix ($u\sqsubseteq u\cdot v$) and $\ssin$ for the sub-sequence relation ($u\ssin v\cdot u\cdot w$). If unambiguous we may represent concatenation simply as juxtaposition ($uv$). Let $\mathbb A$ be a set of `\emph{names}'. 

\begin{mydef}[Pointer sequence]
  Given a set $M$, a \emph{pointer sequence} $p\in J_M$ is a sequence $p\in (M\times \mathbb A\times \mathbb A)^*$ such that for all $q\cdot (m,a,b)\sqsubseteq p$, for all $(m',a',b')\ssin q$, $b\neq a'$ and $b\neq b'$. 
\end{mydef}
We write these triples as $ma\cobnd b$.  The addresses $a,b,c,\ldots\in\mathbb A$ are just names, and the notation $\cobnd b$ means that the name $b$ is `\emph{fresh}', i.e.\ not used earlier in the sequence. We sometimes write this condition as $b\,{\#}\,q$. In general we will employ the Barendregt name convention that if two names $a,b$ are denoted by distinct variables they are distinct $a\neq b$. Answers never justify (in these games) so we may write their unused name as $\_$ or we may omit the whole $\cobnd{\_}$ component altogether.  In a pointer sequence, by a \emph{move occurrence} we mean the move along with the justifier $a$ and, if it is the case, the pointer name $\cobnd b$, taken as a whole.  We write $J_A$ for the set of pointer sequences over the moves of an arena $A$.

Going back to the earlier example, the interaction corresponds to the following pointer sequence:

\[
q'a \cobnd b\cdot q''b\cobnd c\cdot m''c\cdot q''b\cobnd d\cdot n''d\cdot p'b,
\]
noting that  $a$ is the only name without a previous binder, and is used by the initial question. If we were to represent the pointers graphically, the sequence above would be:
\begin{center}
  \includegraphics[scale=1.2]{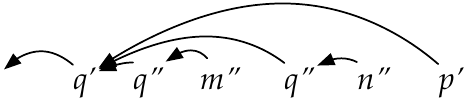}
\end{center}
In general, if we draw a pointer sequence we will omit the `dangling' pointers from the diagram.

The \emph{pointer sequence} represents not only the actions, that is the calls and returns, but also what calls correspond to what returns and even what calls are caused by other calls. In a sequential language, for terms up to order three the pointers can be actually uniquely reconstructed from the sequence itself. Otherwise the justification pointers are necessary.

\begin{mydef}[Play]
  A pointer sequence $p\in J_A$ is said to be a \emph{play} when:
  \begin{itemize}
  \item for any $p'\cdot ma\cobnd c \sqsubseteq p$, $p'\neq\epsilon$, there exists  $q\in Q_A,b\in\mathbb A$ such that $q b\cobnd a\ssin p'$ and $q\vdash_A m$.
  \item if $qa\cobnd b\sqsubseteq p$ then $q\in I_A$. 
  \end{itemize}
\end{mydef}
Above, it is implicit that $m\in M_A$, and $a,b,c\in\mathbb A$.

\newcommand{\Play}{\mathbf{P}}

We write the set of plays of an arena $A$ as $\Play_A$. They represent computations which are \emph{causally sensible}, so that pointers are consistent with the enabling relation. The behaviour of a function that would return or ask for its argument without being itself called is, for example, not causally sensible and its corresponding interactions are thus not plays.

Let $\pi:\mathbb A \rightarrow \mathbb A$ be bijections representing \emph{name permutations}, and define \emph{renaming actions} of a name permutation on a pointer sequence over arena $A$ as
\begin{align*}
  \pi\bullet \epsilon = \epsilon,\quad
  \pi\bullet (p\cdot ma\cobnd b) = (\pi\bullet p)\cdot (m\,\pi(a)\,\cobnd{\pi(b)}) 
\end{align*}

\begin{mylem}
  If $p\in \Play_A$ then $\pi \bullet p \in \Play_A$, for any bijection $\pi:\mathbb A\rightarrow \mathbb A$.
\end{mylem}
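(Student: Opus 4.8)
The plan is to exploit the fact that the renaming action $\pi\bullet(-)$ rewrites only the two name components of each triple, while leaving the moves and the length and ordering of the sequence completely intact. Every clause in the definition of a play is therefore of one of two kinds: either it constrains the moves $m\in M_A$ together with the move-level data $Q_A$, $I_A$ and ${\vdash}_A$, none of which the action disturbs, or it asserts an equality or disequality between names, which any bijection preserves in both directions. The one structural fact I would record first is that the action is a homomorphism for concatenation, $\pi\bullet(u\cdot v)=(\pi\bullet u)\cdot(\pi\bullet v)$, proved by a one-line induction on $v$ from the defining equations. From this it follows that $\pi\bullet(-)$ preserves and reflects both the prefix relation $\sqsubseteq$ and the subsequence relation $\ssin$, that it is invertible with inverse $\pi^{-1}\bullet(-)$, and that it carries the triple $ma\cobnd b$ occurring at each position of $p$ to the triple $m\,\pi(a)\cobnd{\pi(b)}$ at the matching position of $\pi\bullet p$.

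I would then check the three clauses in turn. For the pointer-sequence (freshness) clause, an occurrence $m\,\pi(a)\cobnd{\pi(b)}$ of $\pi\bullet p$ and an earlier occurrence $m'\,\pi(a')\cobnd{\pi(b')}$ descend from occurrences $ma\cobnd b$ and $m'a'\cobnd b'$ of $p$ in the same relative position; since $p$ is a pointer sequence we have $b\neq a'$ and $b\neq b'$, and injectivity of $\pi$ yields $\pi(b)\neq\pi(a')$ and $\pi(b)\neq\pi(b')$, so $\pi\bullet p\in J_A$. For the justification clause, a non-initial occurrence $m\,\pi(a)\cobnd{\pi(c)}$ of $\pi\bullet p$ traces back to $ma\cobnd c$ in $p$, whose justifier $q\,b\cobnd a$ (with $q\in Q_A$ and $q\vdash_A m$) maps to $q\,\pi(b)\cobnd{\pi(a)}$; this image lies in the corresponding earlier portion of $\pi\bullet p$, retains the same question move $q$ and the same enabling $q\vdash_A m$, and carries precisely the pointer name $\pi(a)$ that the renamed move now points to. For the initial clause, the first triple of $\pi\bullet p$ has the same move as the first triple of $p$, so membership $q\in I_A$ is inherited. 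Hence $\pi\bullet p\in\Play_A$.

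Since none of these verifications requires a genuine idea, the only point that deserves attention is the direction of the quantifiers and of the name (dis)equalities. Injectivity of $\pi$ is what makes the freshness disequalities survive, and it is in fact all that the forward implication strictly needs. Full bijectivity is what lets the argument be stated symmetrically and exhibits $\pi^{-1}\bullet(-)$ as an inverse, which would upgrade the claim to the stronger statement that each $\pi\bullet(-)$ is an automorphism of $\Play_A$ and that these maps assemble into an action of the group of name permutations; I would flag this as the natural reading of why the hypothesis is phrased with a bijection rather than a mere injection.
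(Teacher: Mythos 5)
Your proof is correct and is precisely the elementary verification the paper intends: the text states ``The proofs are elementary. If they are also straightforward we will leave them as exercise,'' and your clause-by-clause check (injectivity preserving the freshness disequalities, functoriality of the action carrying justifiers to justifiers, and the initial move being untouched) is exactly that exercise carried out. Your closing observation that injectivity alone drives the forward implication, with bijectivity supplying the inverse action $\pi^{-1}\bullet(-)$ and hence a genuine group action of name permutations, is a sound and worthwhile refinement consistent with the nominal-sets perspective the paper cites.
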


The proofs are elementary. If they are also straightforward we will leave them as exercise. 

A \emph{strategy} in an arena $A$ is any set of plays which is prefix closed, closed under choices of pointer names (\emph{equivariant}), and closed over $O$-moves.

\begin{mydef}[Strategy]
  A strategy over an arena $\sigma : A$ is a set of plays such that for any $p\in\sigma$ the following properties hold:
  \begin{description}
  \item[prefix-closed] If $p' \sqsubseteq p$ then $p'\in\sigma$
  \item[O-closed] If $ p\cdot m\in \Play_A$ for some $m\in O_A$ then $p\cdot m\in \sigma$
  \item[equivariance] For any permutation $\pi$, $\pi\bullet p\in \sigma$.
  \end{description}
\end{mydef}

The conditions above have intuitive explanations.
Prefix-closure is a natural condition on trace semantics, going back to~\cite{hoare1978communicating}'s model of CSP and beyond, to common encodings of trees in set theory. It has a clear causal motivation in the sense that a trace semantics is a history of behaviour, and any history must be prefix-closed.

The $O$-closure condition reflects the fact that a term has no control over which one of a range of possible next moves the context might choose to play. Finally, pointer equivariance is akin to an alpha-equivalence on plays, motivated by the fact that pointer names are not observable, so the choice of particular names in a play is immaterial. Name, equivariance and related concepts and reasoning principles are comprehensively studied by \cite{pitts2013nominal}.

In this most general setting names introduced by moves are required to be fresh. In more constrained settings it can be determined that a name is no longer to be used, because any use of that name would violate the constraints. In fact most game models in the literature have this property, with the model presented here being an exception. If a name is no longer usable than it is possible to introduce a notion of `\emph{scope}' for that name, raising the possibility of name reuse. \cite{DBLP:conf/csl/GabbayGP15} give a formulation of game models where pointer names are scoped. 

Note that it is equivalently possible to present strategies as a next-move function from a play to a $P$-move (or set of $P$-moves) along with the justification infrastructure, indicating the next move in the play. If $\mathcal P$ is the power-set, then
\begin{align*}
\hat\sigma : \Play_A \rightarrow \mathcal P(P_A\times {\mathbb {A}}^2), 
\end{align*}
such that for all $p\in\Play_A$, and $(m,a,b)\in\hat\sigma(p)$, $p\cdot ma\cobnd b\in\Play_A$.

\begin{mylem}
  Let $\hat\sigma : \Play_A \rightarrow \mathcal P(P_A\times {\mathbb {A}}^2)$ be a next-move function, and let $\sigma$ be the smallest set such as
  \begin{description}
    \item[Empty:] $\epsilon \in \sigma $
    \item[P-move:] $\text{if }p\in\sigma \text{ and } m a \cobnd b\in \hat\sigma(p) \text{ then } p\cdot m a \cobnd b \in \sigma$
    \item[O-move:] $\text{if }p\in\sigma \text{ and } p\cdot m a \cobnd b\in \Play_A^O\text{ then } p\cdot m a \cobnd b \in \sigma$.
  \end{description}
  The set $\sigma$ is a strategy. 
\end{mylem}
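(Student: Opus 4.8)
The plan is to verify, one at a time, the three closure properties in the definition of a strategy, after first discharging the more basic fact that every element of $\sigma$ is indeed a play, i.e.\ $\sigma \subseteq \Play_A$. In each case the natural proof principle is rule induction over the inductive presentation of $\sigma$ as the least set closed under \textbf{Empty}, \textbf{P-move} and \textbf{O-move}: it suffices to show that the property in question holds of $\epsilon$ and is preserved by each of the two extension rules.

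That $\sigma \subseteq \Play_A$ is the warm-up. The base case is $\epsilon \in \Play_A$, which holds vacuously. For \textbf{P-move}, if $p \in \Play_A$ and $ma\cobnd b \in \hat\sigma(p)$, then $p \cdot ma\cobnd b \in \Play_A$ is exactly the defining property of a next-move function. For \textbf{O-move} the conclusion $p \cdot ma\cobnd b \in \Play_A^O \subseteq \Play_A$ is built into the side condition. Prefix-closure is equally direct by rule induction: the only prefixes a rule can newly create are of the form $p \cdot ma\cobnd b$ extending some $p$ already in $\sigma$, so the prefixes of $p \cdot ma\cobnd b$ are just $p \cdot ma\cobnd b$ itself together with the prefixes of $p$, all of which lie in $\sigma$ by the induction hypothesis. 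O-closure requires no induction at all: the clause demanded by the \textbf{O-closed} condition is verbatim the \textbf{O-move} rule (reading $\Play_A^O$ as the plays whose final move is an O-move).

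The one property that is not mere bookkeeping is \textbf{equivariance}, and this is where I expect the real work --- and a hidden hypothesis --- to lie. Closure of $\sigma$ under an arbitrary renaming $\pi$ cannot hold unless the generating function $\hat\sigma$ is itself equivariant, in the nominal sense that $\hat\sigma(\pi \bullet p) = \pi \bullet \hat\sigma(p)$; otherwise a single fresh choice of pointer name emitted by $\hat\sigma$ would fail to be matched by its renamings, breaking closure. I would therefore assume, or record as a standing convention, that next-move functions are equivariant morphisms of nominal sets. Granting this, the argument is again rule induction establishing $p \in \sigma \Rightarrow \pi \bullet p \in \sigma$ for every $\pi$. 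The base case $\pi \bullet \epsilon = \epsilon$ is trivial. For \textbf{P-move}, from $(m,a,b) \in \hat\sigma(p)$ and the induction hypothesis $\pi \bullet p \in \sigma$, equivariance of $\hat\sigma$ yields $(m,\pi(a),\pi(b)) \in \hat\sigma(\pi \bullet p)$, whence the same rule places $\pi \bullet (p \cdot ma\cobnd b) = (\pi \bullet p) \cdot (m\,\pi(a)\cobnd{\pi(b)})$ in $\sigma$. For \textbf{O-move}, the induction hypothesis gives $\pi \bullet p \in \sigma$, the earlier renaming Proposition gives $\pi \bullet (p \cdot ma\cobnd b) \in \Play_A$, and since the permutation action leaves the O/P polarity of the final move untouched, this extended play is still in $\Play_A^O$, so the \textbf{O-move} rule applies.

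The main obstacle, then, is conceptual rather than computational: recognising that equivariance of the constructed $\sigma$ is not automatic but rests squarely on equivariance of $\hat\sigma$, with the \textbf{P-move} step under renaming being the precise point at which this hypothesis is consumed. Everything else is a routine induction driven by the single observation that each rule appends exactly one move.
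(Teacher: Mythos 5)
Your proof is correct and follows essentially the same route as the paper's, which disposes of the lemma in three sentences: prefix-closure by construction, O-closure because the \textbf{O-move} rule is verbatim the O-closed condition, and equivariance ``from general principles.'' Your one substantive addition --- that equivariance of $\sigma$ is not automatic but requires $\hat\sigma$ itself to be equivariant, with that hypothesis consumed exactly at the \textbf{P-move} step --- is precisely what the paper's appeal to general nominal-set principles (its citation of Pitts) tacitly assumes, so this is a correct sharpening of the same argument rather than a different one.
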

\begin{proof}
The set $\sigma$ is prefix-closed by construction. Adding all O-moves wherever legal ensures O-completeness. Equivariance holds from general principles. 
\end{proof}
When we specify a next-move function, if the result is a singleton set $\{ma \cobnd b\}$ we may simply write $ma \cobnd b$, and if $\hat\sigma(p)=\emptyset$ we may omit that case from the definition. 

We will sometimes define strategy directly and some  other times via the next-move function, whichever is more convenient.
  
\newcommand{\strat}{\mathrm{strat}}
\begin{mydef}
Given a set of plays over an arena $A$, $\sigma\subseteq P_A$, let us write $\strat(\sigma)$ for the least strategy including $\sigma$.
\end{mydef}

We will sometimes abuse the notation above by applying it to a set of sequences of moves, in the case that the pointer structure can be unambiguously reconstructed.

\begin{myex}
  In arena $\nata$,
  \[
  \sigma_0 = \strat(q\cdot 0)=
  \strat(qa\cobnd b\cdot 0b)=
  \{\epsilon, qa\cobnd b, qa\cobnd b\cdot 0b \mid a,b\in\mathbb A\}.
  \]
  The next-move function is
  \[
  \hat\sigma_0(qa\cobnd b)=0a.
  \]
  
\end{myex}

\subsection{Examples of strategies}

If we consider programming languages in the style of~\cite{DBLP:journals/cacm/Landin66}, i.e.\  the simply-typed lambda calculus with additional operations, such additional operations can be defined in the unrestricted game model. Let us consider several examples. 

\subsubsection{Arithmetic}

Any arithmetic operator $\circledast : nat\rightarrow nat\rightarrow nat$ is interpreted by a strategy over the arena $\nata\Rightarrow\nata\Rightarrow\nata$. Let us tag the moves of the first $\nata$ arena with ${-}_1$, the moves of the second with ${-}_2$ and leave the third un-tagged (the trivial tag). Then the interpretation of the operator is in most cases
\[
\sigma_{\circledast}=
\strat\left(\{
qq_1m_1q_2n_2p \mid m, n, p\in\mathbb N \land p = m \circledast n
\}\right)
\]
Note that in the case of division, or any other operation with undefined values, the strategy must include those cases explicitly:
\[
\sigma_{\div}=
\strat\left(\{
qq_1m_1q_2n_np \mid m, n\neq 0, p\in\mathbb N\land p = m \div n
\} \cup
\{qq_1m_1q_20_2 \mid m\in\mathbb N
\right)
\]
Following the $0$ $O$-answer, there is no way $P$ can continue, i.e.\ $\hat\sigma_{\div}(qq_1m_1q_20_2)=\emptyset$.

From this point of view sequencing can be seen as a degenerate operator which evaluates then forgets the first argument, then evaluates and return the second
\[
\sigma_{seq}=
\strat\left(\{
qq_1m_1q_2n_2n \mid m, n\in\mathbb N
\}\right)
\]
Of course, sequencing commonly involves commands $com$ which are degenerate, single-value, data types which are constructed just like the natural numbers but using a singleton set instead of $\mathbb N$. 

The flexibility of the strategic approach also gives an easy interpretation to shortcut (lazy) arithmetic operations:
\[
\sigma_{\times}=
\strat\left(\{
qq_1m_1q_2n_2p \mid m\neq 0, n, p\in\mathbb N\land p = m \times n
\} \cup
\{qq_10_10\}
\right)
\]

This comes in handy when implementing an if-then-else operator (over natural numbers), in arena $\mathit{ Bool}\Rightarrow\nata\Rightarrow\nata\Rightarrow\nata$, where $\mathit{Bool}$ is the arena of Booleans, with $M_{\mathit{Bool}}=\mathbb B=\{\mathit{tt},\mathit{ff}\}$:

\[
\sigma_{\mathit{if}}=
\strat\left(\{
qq_1\mathit{tt}_1q_2n_2n \mid n\in\mathbb N
\} \cup
\{
qq_1\mathit{ff}_1q_3n_3n \mid n\in\mathbb N
\}
\right)
\]

These strategies are found in the original PCF model of \cite{DBLP:journals/iandc/HylandO00}.

\subsubsection{Non-determinism}

A non-deterministic Boolean choice operator $\mathit{chooseb}:\mathbb B$ is interpreted by the strategy
\[
\sigma_{chooseb}=\strat(\{q\cdot tt, q\cdot \mathit{ff}\})
\]
where two $P$-answers are allowed. This can be extended to probabilistic choice by adding a probability distribution over the strategy. 

Note that the flexibility of the strategic approach can allow the definition of computationally problematic operations such as unbounded non-determinism $\mathit{choosen}:\mathbb N$,
\[
\sigma_{choosen}=\strat(\{qn \mid n \in \mathbb N\})
\]

Nondeterministic and probabilistic game semantics have been studied by~\cite{DBLP:conf/lics/HarmerM99} and~\cite{DBLP:conf/lics/DanosH00}, respectively. In terms of the Abramsky Cube, these games lead to definability via the relaxation of the determinism condition, which means that the strategy function can result in more than one possible move. By contrast, deterministic language strategies respond with at most one move for any given play. 

\subsubsection{State}

In order to model state we first need to find an appropriate arena to model assignable variables. In the context of call-by-name it is particularly easy to model \emph{local} (bloc) variables ($new\ x\ in\ t$, where $x$ is the variable name and $t$ the term representing the variable block). It turns out that $new$ needs not be a term-former but it can be simply a higher order language constant $new : (var\rightarrow T)\rightarrow T$ where $T$ is some language ground type.

The type of variables $var$ can be deconstructed following an `object oriented' approach initially proposed by~\cite{reynolds81}. A variable must be readable $der : var \rightarrow nat$ and assignable $\mathit{asg}:var\rightarrow nat\rightarrow nat$. Since no other operations are applicable, we can simply define $var = nat \times (nat \rightarrow nat)$ which means $der = proj_1$, $\mathit{asg}=proj_2$, so assignment behaves like in C, returning the assigned value. We will see later how projections are uniformly interpreted by strategies.

What is interesting is the interpretation of the $new$ operation in arena $(\mathit{Var}\Rightarrow\mathit{Unit})\Rightarrow \mathit{Unit}'$. With the decomposition above in mind and for the sake of clarity we call the moves in $\mathit{Var}$ as follows:
\begin{description}
\item[Read request]: $rd$
\item[Value read]: $val(n)$ for $n\in\nata$
\item[Write request]: $wr(n)$ for $n\in\nata$
\item[Value written]: $ok(n)$ for $n\in\nata$.
\end{description}

We define this strategy using the next-move function. The strategy will include copy-cat moves between $\mathit{Unit}$ and $\mathit{Unit'}$ along with stateful moves.

If a read ($rd$) move is played by O then P will respond with $val(n)$ where $n$ is the last move it played before, which can be a read value $val(n)$ or a written acknowledgment $ok(n)$.

\[
\hat\sigma_{new}(p\cdot m(n) a \cdot rd\,b\cobnd c) = val(n)c,
\quad
m\in\{val, ok\}.
\]
If a write $wr(n)$ move is played by O then P will always acknowledge it with $ok(n)$. 
\[
\hat\sigma_{new}(p\cdot wr\,a\cobnd b) = val(n)b
\]

These strategies were introduced by \cite{DBLP:journals/entcs/AbramskyM96} in the game model of IA. On the Abramsky Cube these strategies relax the `\emph{innocence}' condition of the strategy function which states that for any play $p$ there is a smaller play computed from it, called `\emph{the view}' $\lceil p\rceil$ such that $\hat\sigma(p)=\hat\sigma(\lceil p\rceil)$. In other words the term has a `\emph{restricted memory}' of the play in choosing the next move. This subtle condition has been studied quite by \cite{DBLP:conf/csl/DanosH01,DBLP:conf/fsttcs/HarmerL06,DBLP:conf/lics/HarmerHM07}.

\subsubsection{Control}\label{sec:control}

If-then-else is a very simple control operator, but more complex ones can be defined. The family of control operators is large, so let us look at a simple one, $catch:(com_1\rightarrow nat_2)\rightarrow option\ nat$ where the type $option\ nat$ is interpreted in an arena constructed just like $\nata$ but using $\mathbb N+1$ instead of $\mathbb N$. The extra value indicates an error result ($\baro$). Just like in the case of state, the construct $catch(\lambda x.t)$ can be sugared as $\mathit{escape}\ x\ in\ t$. If $x$ is used in $t$ then the enclosing $catch$ returns immediately with $\baro$, otherwise it returns whatever $t$ returns.

The strategy is
\begin{align*}
\sigma_{catch} =\strat\left(  \{ qq_2n_2n \mid n\in \mathbb N \} \right)
\cup  \{ qq_2q_1\baro\} 
)
\end{align*}

Game semantics for languages with control have been initially studied by~\cite{DBLP:conf/lics/Laird97}. In the Cube, these strategies relax the `\emph{bracketing}' constraint of the PCF model, which requires questions and answers to nest like well-matched brackets. 

\subsubsection{Concurrency}

As our final example we will consider \emph{parallel composition} of commands, $par:com_1\rightarrow com_2\rightarrow com$. In the case of this strategy, which represents a function which executes its arguments \emph{asynchronously} all interleavings of the two argument executions are acceptable:
\begin{align*}
  \sigma_{par} = \strat\left(q\cdot (q_1a_1\mid q_2a_2) \cdot a\right), 
\end{align*}
where $p\mid q$ is the set of all interleavings of two sequences.

The constraint relaxed by this strategy is the \emph{alternation} of $O$/$P$ moves, and was first studied by~\cite{DBLP:journals/apal/GhicaM08}. 

\subsection{Composing strategies}

In the previous section we looked at strategies interpreting selected language constants. In order to construct an interpretation of terms, denotationally, strategies need to \emph{compose}.

The intuition of composing a strategy $\sigma:A\Rightarrow B$ with a strategy $\tau:B\Rightarrow C$ is to use arena $B$ as an interface on which in a first instance $\sigma$ and $\tau$ will synchronise their moves.
\begin{center}
  \includegraphics[scale=1.2]{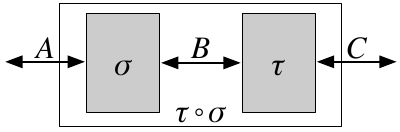}
\end{center}
After that, the moves of $B$ will be hidden, resulting in a strategy $\sigma;\tau:A\Rightarrow C$. In order to preserve proper justification of plays all pointers that factor through hidden moves will be `extended' so that the hiding of the move will not leave them dangling. In order to define composition some ancillary operations are required.

\newcommand{\delm}{\downharpoonright}

The first one is deleting moves while extending the justification pointers over the deleted moves, in order to preserve justification~\citep{DBLP:journals/entcs/GabbayG12}.
\begin{mydef}[Deletion]
  Let $X\subseteq M$ be sets. For a pointer sequence $p\in J_M$ we define \emph{deletion} inductively as follows, where we take $(p', \pi)=p\delm X$:
  \begin{align*}
    \epsilon\delm X &= (\epsilon, id) \\
    (p\cdot ma\cobnd b) \delm X &= (p'\cdot m\cdot \pi(a)\cobnd b, \pi) & \text{if $m\not\in X$}\\
    (p\cdot ma\cobnd b) \delm X &= (p', (\pi\mid b\mapsto \pi(a))) & \text{if $m\in X$}
  \end{align*}
\end{mydef}
The result of a deletion is a pointer sequence along with a function $\pi:\mathbb A\rightarrow \mathbb A$ which represents the chain of pointers associated with deleted moves. Informally, by $p\delm X$ we will understand the first projection applied to the resulting pair. Since deletion only removes names, it is immediate that $p\delm X$ is a well-formed pointer sequence.
\begin{mylem}
If $X\subseteq M$ and  $p\in J_M$ then $p\delm X\in J_{M\setminus X}$.
\end{mylem}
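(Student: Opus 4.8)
The plan is to argue by structural induction on $p$, following verbatim the three recursive clauses that define $p\delm X$, and to strengthen the induction hypothesis with an invariant on the accumulated relabeling $\pi$. Writing $(p',\pi)=p\delm X$, the invariant I would carry alongside ``$p'\in J_{M\setminus X}$'' is that every name occurring in $p'$ already occurs in $p$, together with the auxiliary fact that for every name $c$ either $\pi(c)=c$ or $\pi(c)$ occurs in $p$. This second clause is what captures the idea that pointer-extension never invents a name: it only ever redirects a justifier to a name that has been seen earlier. The two parts have to be proved simultaneously, since the bound on the names of $p'$ relies on controlling the range of $\pi$.

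The base case $\epsilon\delm X=(\epsilon,id)$ is immediate. In the inductive step I consider $(p\cdot ma\cobnd b)\delm X$ with $(p',\pi)=p\delm X$. The deleted case $m\in X$ is essentially bookkeeping: the sequence component is unchanged, so $p'\in J_{M\setminus X}$ is inherited from the hypothesis, and I only need to check that the update $(\pi\mid b\mapsto\pi(a))$ preserves the auxiliary fact, which is a one-line verification, since $a$ occurs in $p\cdot ma\cobnd b$ and $\pi(a)$ is by hypothesis either $a$ or a name of $p$. In the retained case $m\notin X$ the appended triple is $m\,\pi(a)\cobnd b$; that $m\in M\setminus X$ and that the names lie in $\mathbb A$ are clear, so the one genuine obligation is the freshness of the new binder $b$ against every name already present in $p'$.

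Here is where the invariant does the work, and where the only real subtlety lies. Every binder of $p'$ is a binder of $p$, and, crucially, every justifier of $p'$ has the form $\pi(a')$, which by the auxiliary fact is again a name occurring in $p$; hence every name of $p'$ occurs in $p$. But $p\cdot ma\cobnd b\in J_M$ already guarantees that $b$ is fresh for the whole of $p$, i.e.\ $b$ differs from every justifier and every binder appearing in $p$, so $b$ differs from every name of $p'$, and together with $p'\in J_{M\setminus X}$ this places $p'\cdot m\,\pi(a)\cobnd b$ in $J_{M\setminus X}$. I expect the main obstacle to be precisely this justifier-relabeling: one must be certain that chasing pointers across deleted moves, as encoded by $\pi$, can only ever land on a name introduced earlier in the sequence, never on a name that might clash with a later fresh binder. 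This backward-pointing property of the chains built into $\pi$ is exactly what the auxiliary invariant isolates, and once it is in place the remainder is the routine bookkeeping that justifies the author's remark that deletion ``only removes names.''
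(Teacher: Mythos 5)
Your proof is correct. The paper offers no actual proof of this lemma---it only remarks, just before stating it, that ``since deletion only removes names, it is immediate''---and your induction is exactly the elaboration that remark presupposes: the genuine content is your auxiliary invariant that $\pi$ only ever redirects a justifier to a name already occurring in the input sequence, which is what guarantees that a relabelled justifier $\pi(a)$ in the retained case can never collide with a later fresh binder $b$, and with it both the retained and deleted cases check out.
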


For example, the removal of the grayed-out moves in the diagrammatic representation of the play below results in a sequence with reassigned pointers:
\begin{center}
  \includegraphics[scale=1.2]{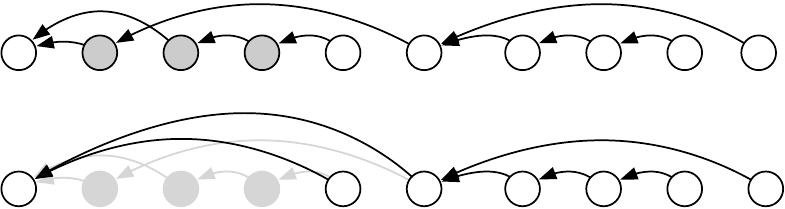}
\end{center}
Note that in general the deletion of an arbitrary set of moves from a play does not result in another play, since the pointers may be reassigned in a way that is not consistent with the enabling relation.  There is however an important special situation:
\begin{mylem}\label{lem:delplay}
  Given arenas $A,B$ if $p\in \Play_{A\Rightarrow B}$ then $p\delm A \in \Play_{B}$.
\end{mylem}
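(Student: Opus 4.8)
The plan is to reduce the statement to checking the two defining clauses of a play over $B$ for the sequence $p\delm A$, since the preceding deletion lemma already guarantees that $p\delm A$ is a well-formed pointer sequence over $M_{A\Rightarrow B}\setminus A = M_B$. The engine of the whole argument is a single structural observation about the arrow arena, which I would isolate first: by the definition of ${\vdash}_{A\Rightarrow B}$, the only enabling pairs are those within $A$ (the $inl$ image of ${\vdash}_A$), those within $B$ (the $inr$ image of ${\vdash}_B$), and those in $inr(I_B)\times inl(I_A)$. There is no pair of the form $inl(\_)\vdash inr(\_)$, so every enabler of a $B$-move is itself a $B$-move. This means that, as far as justification is concerned, the $B$-moves of $p$ form a self-contained sub-structure that never points into $A$.

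Next I would pin down where initial moves can occur. Condition (e3) says initial moves are enabled by nothing, so the justification clause of the play definition — which demands an enabler for every non-first move — forces every initial move to be the first move of $p$. Combined with the second play clause (the first move lies in $I_{A\Rightarrow B}=inr(I_B)$), this shows that $p$, if non-empty, opens with a $B$-move $inr(i)$, $i\in I_B$, the unique initial move. Since this is a $B$-move it is retained by $p\delm A$ and stays first, settling the clause that the first move of $p\delm A$ lies in $I_B$. For the justification clause I would take any non-first $B$-move $n$ of $p\delm A$; it is then also non-first in $p$, so in $p$ it is justified by an earlier question $q$ with $q\vdash_{A\Rightarrow B} inr(n)$. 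By the structural step $q$ must be a $B$-move $inr(m)$ with $m\vdash_B n$, hence $q$ is retained by the deletion and $m\in Q_B$ follows from (e1).

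The remaining and most delicate point is the pointer bookkeeping introduced by deletion: I must ensure the rewriting function $\pi$ does not silently redirect a $B$-move's justifier when the intervening $A$-moves are removed. The key auxiliary fact I would establish (by a short induction on prefixes, and true of deletion in general) is that $\pi$ is altered only at the pointer name of a \emph{deleted} move, so that $\pi$ restricted to the pointer names of retained moves is the identity; this uses the freshness convention, which makes every binding name unique. Applying it to the justifier $a$ of $n$: the name $a$ is the pointer name of the retained justifying $B$-move $q$, so $\pi(a)=a$, and the justification pointer of $n$ therefore survives verbatim in $p\delm A$, still pointing at the enabling $B$-question $q$. I expect this $\pi$-invariance to be the main obstacle — not because it is deep, but because it is the step where freshness, the definition of deletion, and the ``$B$ is justified only by $B$'' structure must all be seen to line up; once it is in place, the two play clauses for $B$ follow by directly reading off the definitions.
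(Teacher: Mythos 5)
Your proposal is correct and takes essentially the same route as the paper's proof: both hinge on the single structural fact that in $A\Rightarrow B$ no $A$-move enables a $B$-move, so every justification pointer of a retained $B$-move already targets a retained $B$-move and deletion performs no reassignment. Your explicit $\pi$-invariance observation and the treatment of the (unique, necessarily first) initial move simply make rigorous what the paper's one-line argument (``no pointer reassignment is required'') leaves implicit.
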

\begin{proof}
  Since the enabling relation is a DAG, no $A$-move enables any $B$-move. In a play $p$ thus there can be no pointer from an $A$-move occurrence to a $B$-move occurrence. Therefore when the $A$-move occurrences are deleted no pointer reassignment is required, so the result is a $B$-play. 
\end{proof}

The second operation is the selection of `hereditary' sub-plays, i.e. all the moves that can be reached from an initial set of moves following the justification pointers~\citep{DBLP:journals/entcs/GabbayG12}.

\newcommand{\selm}{\upharpoonright}

\begin{mydef}[Hereditary justification]
  Let $X\subseteq M$ be sets. For a pointer sequence $p\in J_M$ we define the \emph{hereditarily justified} sequence $p\selm X$ recursively as below, where we take $p\selm X=(p',X')$:
  \begin{align*}
    \epsilon\selm X &= (\epsilon, X) \\
    (p\cdot ma\cobnd b) \selm X &= (p'\cdot m a\cobnd b, X\cup\{b\}) & \text{if $a\in X$}\\
    (p\cdot ma\cobnd b) \selm X &= (p', X) & \text{if $a\not\in X$}\\
  \end{align*}
\end{mydef}

The result of a hereditary justification is a pointer sequence along with a set of names $X\subseteq\mathbb A$ which represents the addresses of selected questions. Informally, by $p\selm X$ we will understand the first projection applied to the resulting pair. Since hereditary justification only removes names, it is immediate that $p\selm X$ is a well-formed pointer sequence.
\begin{mylem}
If $X\subseteq M$ and  $p\in J_M$ then $p\selm X\in J_{X}$.
\end{mylem}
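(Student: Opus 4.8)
The plan is to reduce membership in $J_X$ to the single freshness condition that defines a pointer sequence, and then to observe that this condition can only become \emph{easier} to satisfy when move-occurrences are removed. Recall that $p\in J_M$ means exactly that for every prefix $q\cdot (m,a,b)\sqsubseteq p$ and every $(m',a',b')\ssin q$ we have $b\neq a'$ and $b\neq b'$; that is, the name $b$ introduced by a move differs from every name — justifier or introduced — occurring strictly earlier. Note that $J$-membership says nothing about justifiers actually being present (that is the stronger $\Play$ condition), so for this lemma only freshness is at stake.

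First I would establish, by induction on $p$, that the first projection of $p\selm X$ is a scattered subsequence of $p$ in which every retained triple $ma\cobnd b$ is copied verbatim. This is immediate from the recursion: in the retaining case we append exactly $ma\cobnd b$ with no change to $m$, $a$, or $b$, and in the discarding case we drop the triple entirely. Unlike \emph{deletion}, hereditary justification performs no pointer-rewriting — there is no analogue of the map $\pi$ — so no name is ever renamed or freshly introduced; the accumulated set (written $X$ on the right-hand sides, threaded through as the second component of the recursive result) merely records which names have become reachable and thereby governs which triples survive.

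With the subsequence property in hand, freshness transfers directly. Consider any retained triple $ma\cobnd b$ of $p\selm X$ together with the corresponding prefix $q'\cdot ma\cobnd b$. Because retention preserves relative order, every $(m',a',b')\ssin q'$ is also a triple occurring strictly before $ma\cobnd b$ in $p$; hence the names appearing in $q'$ form a subset of the names appearing before $ma\cobnd b$ in $p$. Since $p\in J_M$ already guarantees $b\neq a'$ and $b\neq b'$ against the larger set, the same inequalities hold \emph{a fortiori} against $q'$. Thus $p\selm X$ satisfies the freshness condition and is a well-formed pointer sequence; as every surviving move is copied from $p$ and so lies in the retained alphabet, $p\selm X\in J_X$ (equivalently $J_M$), matching the author's remark that the result is \emph{immediate}.

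I do not expect a genuine obstacle here: the content is simply that freshness is monotone under deleting earlier occurrences, which needs no case analysis on the enabling relation. The only points requiring care are bookkeeping rather than mathematical — reading the recursion so that the accumulating reachable-name set is correctly threaded, and noting the mild abuse whereby $X$ acts as a set of names inside the recursion while being presented as a subset of $M$ in the statement. The comparatively delicate argument is the companion one for \emph{deletion}, where the justifier-rewriting $\pi$ must be shown not to break freshness; selection sidesteps that entirely by never touching a retained pointer.
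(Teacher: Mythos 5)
Your proof is correct and takes essentially the same route as the paper, which states the result without proof on the grounds that hereditary justification ``only removes'' move occurrences: since retained triples are copied verbatim with no pointer rewriting, freshness against a subsequence of the earlier names holds \emph{a fortiori}, exactly as you argue. Your side remarks --- that only the freshness condition of $J_M$ is at stake (not the $\Play$ conditions), and that the statement's $X\subseteq M$ versus the recursion's accumulated set of names in $\mathbb{A}$ is a benign bookkeeping abuse --- are faithful elaborations of the paper's ``immediate'' rather than a different argument.
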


For example, the hereditary justification of the grey move in the diagrammatic representation of the play below results in the sequence below:
\begin{center}
  \includegraphics[scale=1.2]{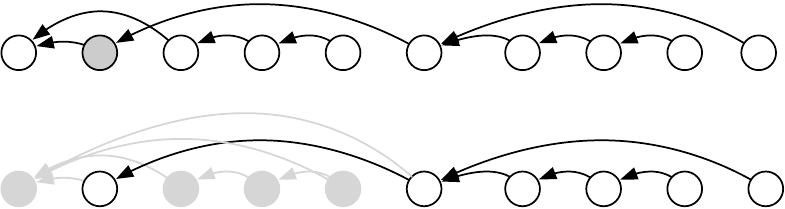}
\end{center}

In general the hereditary justification of an arbitrary set of moves from a play does not result in another play, since it may result in a sequence that does not start with an initial move. The special situation is:

\begin{mylem}\label{lem:selplay}
  Given arenas $A,B$ if $p\in \Play_{A\Rightarrow B}$, with $ma\cobnd b\ssin p$, $m\in I_A$ then $p\selm \{b\} \in \Play_A$.
\end{mylem}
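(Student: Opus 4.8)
The plan is to show that the sub-sequence $p \selm \{b\}$ consists of the occurrence of $m$ together with every occurrence hereditarily justified by it, that all of these are $A$-moves, and that both clauses of the definition of \emph{play} hold for them when read in the arena $A$.

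First I would establish that every retained move is an $A$-move, by an argument dual to that of Lemma~\ref{lem:delplay}. The root $m$ lies in $I_A \subseteq M_A$, so it is an $A$-move. Inspecting the enabling relation of $A \Rightarrow B$, namely ${\vdash}_A + {\vdash}_B \cup inr(I_B)\times inl(I_A)$, the only clause whose left-hand (enabling) move is an $A$-move is ${\vdash}_A$, and its right-hand move is again an $A$-move; the clause ${\vdash}_B$ and the grafting clause $inr(I_B)\times inl(I_A)$ both carry a $B$-move on the left. Since in a play every move is enabled by its justifier, following justification forward from $m$ can never cross into $B$. By induction along the recursion defining $\selm$, every occurrence kept in $p \selm \{b\}$ is therefore an $A$-move, and every retained justification pointer is an instance of ${\vdash}_A$.

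Next I would check that the opening move is initial. Because a play lists each justifier strictly before the move it justifies, the occurrence of $m$ precedes all of its hereditary descendants and is thus the earliest retained occurrence, hence the first move of $p \selm \{b\}$; and $m \in I_A$ by hypothesis. Its former justifier $a$ pointed to the enabling $B$-initial occurrence, which is not retained, so in the sub-sequence $a$ is a dangling name — exactly as required of an initial move of an $A$-play. For the justification condition I would take any retained occurrence $m' a' \cobnd{c'}$ other than the root. By the definition of $\selm$ it is kept precisely because the occurrence it points to, the one carrying address $a'$, is itself kept; and the play condition for $p$ supplies an earlier question $q$ with address $a'$ such that $q \vdash_{A\Rightarrow B} m'$. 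This $q$ is exactly that justifier, hence an $A$-move, so $q \in Q_A$ and $q \vdash_A m'$, and since $q$ is retained the pointer $a'$ still reaches it inside $p \selm \{b\}$. Well-formedness of the resulting pointer sequence is already guaranteed by the lemma on $\selm$ stated above.

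I expect the one delicate point to be the treatment of the root. Two things must hold: that $m$ itself is kept and placed first, so that the dangling of its former justifier is legitimate under the convention that an initial move carries a free justifier name; and that no \emph{other} retained move is initial in $A$. The latter follows from axiom~(e3): every non-root retained move is enabled within $A$ by a retained question, so it cannot lie in $I_A$ and genuinely needs the justifier produced above. Hence $m$ is the unique initial occurrence and all the remaining occurrences are properly justified, which is precisely what the two clauses of the play definition require.
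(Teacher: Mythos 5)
Your proof is correct and takes essentially the same approach as the paper's, whose own three-sentence argument is precisely your key observation: in the arrow arena $A\Rightarrow B$ no $A$-move enables any move outside $A$ (the only enabling clause with an $A$-move on the left is ${\vdash}_A$), so hereditary justification starting from an $A$-initial occurrence stays inside $A$ and yields an $A$-play. Your explicit verification of the two play clauses --- the root is the first retained move, is in $I_A$, and legitimately carries a dangling justifier name, while every other retained move is justified by a retained $A$-question via ${\vdash}_A$ --- merely spells out what the paper leaves implicit.
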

\begin{proof}
  Since the enabling relation is a DAG, no $A$-move enables any $B$-move. In a play $p$ thus there can be no pointer from an $A$-move occurrence to a $B$-move occurrence. Therefore when the hereditarily justified sequence from an $A$-initial move is selected it will result in an $A$-play. 
\end{proof}

Props.~\ref{lem:delplay} and~\ref{lem:selplay} are technically important, and they are consistent with the intuitive model of computation we relied on. What they say is that in a play corresponding to a function of type $A\Rightarrow B$, the sequences associated just with the argument~$A$ or just with the result~$B$ are also plays. In other words, when the behaviour of a function is causally sensible, both the arguments and the body of the function are going to be causally sensible. In both cases the DAG structure of the arena is essential, since it allows no enabling from the argument back to the function body. 

\newcommand{\inter}[1]{\,\underset{ #1 }{||}\,}

\newcommand{\intero}{\,{||}\,}

We now have the requisite operations to define the \emph{interaction} and, finally, the \emph{composition} of strategies.
\begin{mydef}[Interaction]
  Given sets of pointer sequences $\sigma \subseteq J_M, \tau\subseteq J_N$ their \emph{interaction} is defined as 
  \[
  \sigma\inter{M,N} \tau = \{ p\in J_{M\cup N} \mid
  p\delm (M\setminus N) \in \tau \land
  p\delm (N\setminus M) \in \sigma \}.
  \]
\end{mydef}

A good intuition for interaction is of two strategies synchronising their actions on the shared moves $M\cap N$.

\paragraph{Observation.}
This definition will be used to compute the interaction of strategies $\sigma:A\Rightarrow B$ and $\tau:B\Rightarrow C$, but we will ignore the issue of tagging of moves as they participate in the definition of composite arenas, and we will just assume the underlying sets of moves are disjoint. This is not technically correct because the arenas can be equal, case in which the tagging is essential to disambiguate the co-products. But the formalisation of tagging, de-tagging and re-tagging is routine and tedious and may obscure the main points. We are sacrificing some formality for clarity.

\begin{myex}
  Let $\tau=\strat(qq'm'(m+1))$ denote a function that increments its argument and $\sigma=\strat(q'q''n''(2n)')$ a function that doubles its argument, $\sigma,\tau:\nata\Rightarrow\nata$. Their interaction $\sigma\intero\tau$ is a set of sequences of the form $qq'q''m''(2m)'(2m+1)$. As written below, the flow of time is top to bottom and each is lined up with the arena in which it occurs. 
  \[
  \begin{array}{ccccc}
    \nata''& \stackrel\sigma\Rightarrow & \nata' &\stackrel\tau\Rightarrow &\nata \\ \hline
           &                            &        &                         & q  \\
           &                            & q       &                         &   \\
    q       &                            &        &                         &   \\
    m       &                            &        &                         &   \\
           &                            & 2m       &                         &   \\
           &                            &        &                         & 2m+1  \\
  \end{array}
  \]
\end{myex}

\begin{myex}\label{ex:iter}
  As defined, a strategy can interact with another strategy only once. Let $\tau=\strat(qq'm'q'n'(m+n))$ denote a function that evaluates its argument twice and returns the sum of received values, and let $\sigma=\strat(q'0')$ be the strategy for constant 0. The interaction $\sigma\intero\tau$ cannot proceed successfully because removing the untagged moves representing the result of $\tau$ leaves sequences of the shape $q'm'q'n'$ which are not in $\sigma$ no matter what values $m,n$ take. 
  \[
  \begin{array}{ccccc}
    I& \stackrel\sigma\Rightarrow & \nata' &\stackrel\tau\Rightarrow &\nata \\ \hline
           &                            &        &                         & q  \\
           &                            & q       &                         &   \\
           &                            & 0       &                         &   \\
           &                            & ?       &                         &   \\
  \end{array}
  \]
\end{myex}

\begin{mydef}[Iteration]
  Given a set of pointer sequences $\sigma\in J_M$ its iteration on $N\subseteq M$ is the set of pointer sequences
  \[
  {!}_N\sigma =\{
  p \in J_{M} \mid \forall ma\cobnd b\ssin p. m\in N \Rightarrow p\selm\{b\}\in\sigma
  \}
  \]
\end{mydef}

A good intuition of iteration is a strategy interleaving its plays. The definition says that if we select moves form an identified subset $N$ and we trace the hereditarily justified plays, they are all in the original set. We can think of each $p\selm \{b\}$ as untangling the `thread of computation' associated with move $ma\cobnd b$ from the interleaved sequence.

\begin{myex}
  Using interaction with the iterated strategy for 0 in Ex.~\ref{ex:iter} is now possible, so ${!}\sigma\intero\tau=qq'0'q'0'0$.
  \[
  \begin{array}{ccccc}
    I& \stackrel\sigma\Rightarrow & \nata' &\stackrel\tau\Rightarrow &\nata \\ \hline
           &                            &        &                         & q  \\
           &                            & q       &                         &   \\
           &                            & 0       &                         &   \\
           &                            & q       &                         &   \\
           &                            & 0       &                         &   \\
           &                            &         &                         & 0  \\
  \end{array}
  \]
  Note that for iteration to have the desired effect it is essential that strategies are equivariant. Consider the situation if the strategy were non-equivariant.
  \[
    \sigma_0 = \{\epsilon, qa\cobnd b, qa\cobnd b\cdot 0b \}
  \]
  where $a,b\in\mathbb A$ are fixed. Iterated pointer sequences such as $qa\cobnd b\cdot 0b\cdot qa\cobnd b\cdot 0b$ are not well formed because the second occurrence of $\cobnd b$ is no longer fresh. However, because the strategy is equivariant we can choose other names when iterating, so that $qa\cobnd b\cdot 0b\cdot qc\cobnd d\cdot 0d$ is legal. 
\end{myex}

Composition is iterated interaction with the synchronisation moves internalised and hidden. 
\begin{mydef}[Composition]
  Given strategies $\sigma : A\Rightarrow B, \tau:B\Rightarrow C$ we defined their \emph{composition} as $\sigma;\tau = ({!}_{I_B}\sigma\inter{M_{A\Rightarrow B},M_{B\Rightarrow C}}\tau )\delm M_B$.
\end{mydef}
We also use $\tau\circ\sigma\stackrel{\mathit{def}}=\sigma;\tau$.

The definition above is the usual \emph{extensional} presentation of strategy composition, which has the slight disadvantage of eliding some of the computational and operational flavour of the games-based approach. An equivalent \emph{intensional} definition can be given using the strategy functions to compute the next move.

\newcommand{\tsin}{\,{\sqsubset}\hspace{-1.34ex}{-}\,}
\begin{mydef}\label{def:stratintens}
  Given strategies $\sigma : A\Rightarrow B, \tau:B\Rightarrow C$ we define their \emph{interaction function} as $\reallywidehat{\sigma\lightning\tau}: \Play_{(A\Rightarrow B)\Rightarrow C} \rightarrow \mathcal P(M_{(A\Rightarrow B)\Rightarrow C}\times {\mathbb {A}}^2)$,
  \[
  \reallywidehat{\sigma\lightning\tau}(p) =
  \hat\tau(p \downharpoonright M_A) \cup
  \bigcup_{\substack{q a \cobnd b \tsin p \\ q\in I_B}}\hat\sigma(p \upharpoonright qa \cobnd b)
  \]
\end{mydef}

As in the case of the extensional definition, the definition is asymmetrical. Unlike the extensional definition the intensional definition makes some features of composition clearer. The first one is that the behaviour of the composite strategy in the second component ($B\Rightarrow C$) only depends on the history of the play as restricted to that component, $p\downharpoonright M_A$. In other words, the strategy $\tau$ does not `see' what $\sigma$ is up to.
\begin{center}
\includegraphics[scale=1.2]{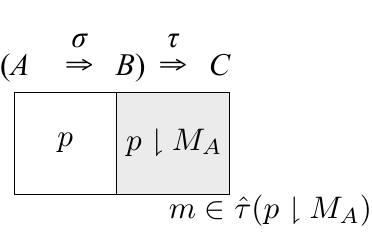}
\end{center} 
The second one is that the behaviour of the composite strategy in the first component ($A\Rightarrow B$) is restricted not only to just the history of the play in that component, but also to each `thread' of the strategy $p\upharpoonright q$, going back to some initial move $q\in I_B=I_{A\Rightarrow B}$.
\begin{center}
  \includegraphics[scale=1.2]{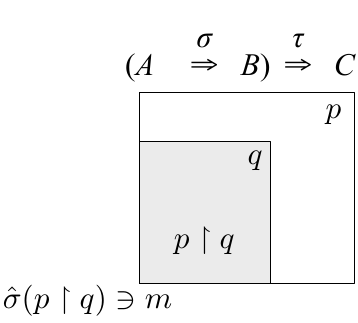}
\end{center}
This definition also makes it more apparent that there is an implicit nondeterminism when a move occurs in the shared arena $B$, as both $\sigma$ and $\tau$ can continue playing independently. We will see in Sec.~\ref{sec:identity} that this has some important consequences.

To arrive at composition itself we need to hide the moves in the interface arena $B$ from each $\reallywidehat{\sigma\lightning\tau}(p)$.
This concludes the detour into the intensional presentation of strategy composition.
\\[2ex]
\begin{mythm}\label{thm:comp}
Given strategies $\sigma : A\Rightarrow B, \tau:B\Rightarrow C$ their composition is also a strategy  $\sigma; \tau : A\Rightarrow C$.
\end{mythm}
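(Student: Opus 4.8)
The plan is to verify that $\sigma;\tau$ satisfies the defining conditions of a strategy over $A\Rightarrow C$: that it is a set of plays (i.e.\ $\sigma;\tau\subseteq\Play_{A\Rightarrow C}$) which is prefix-closed, $O$-closed, and equivariant. Throughout I write $M=M_{A\Rightarrow B}$ and $N=M_{B\Rightarrow C}$, so that $M\setminus N=M_A$, $N\setminus M=M_C$, and $M\cap N=M_B$; by definition an element $s\in\sigma;\tau$ has the form $s=u\delm M_B$ where $u$ ranges over the interaction, i.e.\ $u\delm M_A\in\tau$ and $u\delm M_C\in{!}_{I_B}\sigma$. First I would record two easy preliminaries used repeatedly: that ${!}_{I_B}\sigma$ inherits prefix-closure and equivariance from $\sigma$ (each thread $u\delm M_C\selm\{b\}$ is a prefix, resp.\ a renaming, of a $\sigma$-play), and that deletion commutes with prefix and with renaming in the obvious way. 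These are the routine calculations I would not grind through.

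The heart of the argument, and the step I expect to be the main obstacle, is well-formedness: showing $s=u\delm M_B\in\Play_{A\Rightarrow C}$. By the deletion lemma $s$ is already a well-formed pointer sequence over $M_A\cup M_C=M_{A\Rightarrow C}$, so what must be checked is that every move occurrence is justified consistently with $\vdash_{A\Rightarrow C}={\vdash}_A+{\vdash}_C\cup inr(I_C)\times inl(I_A)$. Here the DAG structure of arenas does the work. The internal enablings $\vdash_A$ and $\vdash_C$ are already present in $A\Rightarrow B$, resp.\ $B\Rightarrow C$, and since no $C$-move can enable an $A$-move and no $A$- or $B$-move can enable a $C$-move, the pointer of an $A$-move (resp.\ $C$-move) in $u$ lands directly on its $\vdash_A$- (resp.\ $\vdash_C$-) justifier and survives the deletion of $M_B$ unchanged. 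The genuinely interesting case is the cross-enabling $inr(I_C)\times inl(I_A)$. Given an $A$-initial occurrence $a_0$, the $\sigma$-thread containing it (a play over $A\Rightarrow B$) justifies $a_0$ by a $B$-initial occurrence $b_0$, while $\tau$ (a play over $B\Rightarrow C$) justifies that same $b_0$ by a $C$-initial occurrence $c_0$; thus in $u$ the pointers form a chain $a_0\to b_0\to c_0$, with $c_0$ dangling. Deleting $M_B$ removes $b_0$ and extends $a_0$'s pointer through it to $c_0$, which is exactly the enabling $c_0\vdash_{A\Rightarrow C}a_0$. The same DAG reasoning forces the first move of $u$ to be this $C$-initial occurrence, so $s$ starts with a move of $I_{A\Rightarrow C}=inr(I_C)$, completing the proof that $s$ is a play.

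With well-formedness in hand the three closure conditions are comparatively routine. For prefix-closure, a prefix $s'\sqsubseteq s$ is $u'\delm M_B$ for the shortest prefix $u'\sqsubseteq u$ deleting to it; then $u'\delm M_A\sqsubseteq u\delm M_A\in\tau$ and $u'\delm M_C\sqsubseteq u\delm M_C\in{!}_{I_B}\sigma$, so $u'$ is again an interaction sequence and $s'\in\sigma;\tau$. Equivariance follows because $\pi\bullet u$ is an interaction sequence (using equivariance of $\tau$ and of ${!}_{I_B}\sigma$) whose $M_B$-deletion is $\pi\bullet s$. For $O$-closure, recall $O_{A\Rightarrow C}=P_A+O_C$ and split on the added move $m$: if $m\in P_A$ it is a non-initial $O$-move of $A\Rightarrow B$, and I extend the $\sigma$-side using $O$-closure of $\sigma$ (hence of ${!}_{I_B}\sigma$); if $m\in O_C$ it is an $O$-move of $B\Rightarrow C$, and I extend the $\tau$-side using $O$-closure of $\tau$, the only boundary case being $m\in I_C$ with $s=\epsilon$, handled directly since then $m\in\tau$ already. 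In each case $m\notin M_B$, so appending $m$ with the justifier prescribed by $s\cdot m\in\Play_{A\Rightarrow C}$ (which by the DAG argument above is already present as an $A$- or $C$-occurrence in $u$, reached via $\vdash_A$ resp.\ $\vdash_C$) yields an interaction sequence deleting to $s\cdot m$, giving $s\cdot m\in\sigma;\tau$.
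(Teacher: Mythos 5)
Your proof is correct and takes essentially the same route as the paper's: the substantive step is identical, namely that the justification chain from an $A$-initial occurrence through its $B$-initial justifier to the $C$-initial move collapses under $\delm M_B$ into exactly the cross-enabling $inr(I_C)\times inl(I_A)$ of $A\Rightarrow C$ (the paper's remark that pointers through hidden $B$-moves are re-routed from $C$-moves directly to $A$-moves), with equivariance, prefix-closure and $O$-closure following by unwinding definitions. The only difference is presentational: the paper factors the argument through the intermediate claim that the interaction is a strategy on the arena $(A\Rightarrow B)\Rightarrow C$ and dismisses the closure conditions as immediate, whereas you argue directly on interaction witnesses and spell out the thread-by-thread details the paper elides.
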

\begin{proof}
  Equivariance is preserved by all operations above, from the general principles of the theory of nominal sets~\citep{pitts2013nominal}. 

  Prefix-closure and $O$-closure are immediate by unwinding the definitions.

  It remains to show that the sequences are valid plays of $A\Rightarrow C$. We already know the pointer sequences are well formed. 
  The first step is to show that ${!}_{I_B}\sigma\inter{M_{A\Rightarrow B},M_{B\Rightarrow C}}\tau$ is a strategy in arena $(A\Rightarrow B)\Rightarrow C$, which is immediate from definitions.
  The second step is to show that $\delm M_B$ gives plays in $A\Rightarrow C$, which is true because pointers from $C$-moves to $B$-moves to $A$-moves are replaced by pointers from $C$-moves directly to $A$-moves. 
\end{proof}

\section{Game semantic models}

\subsection{Cartesian closed categories}

In this essay we are focussing on programming languages that build on the (call-by-name) lambda calculus, so we will focus on games which can model it. Instead of relating directly the syntactic and the semantic models, it is standard to use an abstract mathematical model expressed in terms of \emph{category theory}.\footnote{For the current presentation a minimal familiarity with the basic concepts of this topic is required; accessible introductions and tutorials abound (e.g.~\citep{samsont,milewski}).} For our language, this model is known as a \emph{Cartesian closed category}~\citep{lambek1988introduction}. It is an important model, beautiful in its simplicity, deeply connected with type-theoretical and logical aspects of computations. \cite{baezs} give a fascinating account of these connections.

We will start by attempting to identify a category of games where objects are arenas $A, B, \ldots$ and morphisms $\sigma : A \to B$ are strategies $\sigma : A \Rightarrow B$. The product of two objects is the arena product $A\times B$, the terminal object is the empty arena $I$, and the exponential object is the arrow arena, $B^A= A\Rightarrow B$.

Composition is well defined (Prop.~\ref{thm:comp}), but we still need to verify its required categorical properties: \emph{associativity} and the existence of an \emph{identity} strategy for composition for all arenas. 

\subsubsection{Associativity}

We show the proof of composition in detail, for didactic reasons. The justification pointers can be an awkward mathematical structure, and the original formulation, based on numerical indices into the sequence is particularly unfortunate since any operation on indices requires re-indexing. As a consequence, the original proof of associativity is rather informal. The use of \emph{names} as a representation for pointers eliminates the need for re-indexing and can allow proofs that are both rigorous and elementary. The only challenge of the proof lies in the careful unpacking of several layers of complicated definitions, but once this bureaucracy is dealt with the reasoning is obvious. Other proofs in this presentation are similarly elementary, if tedious. Avoiding these complications can be achieved, but at the cost of some significant additional mathematical sophistication~\citep{lmcs:3966}. 

\begin{mythm}[Associativity]
For any three strategies $\sigma:A\Rightarrow B, \tau:B\Rightarrow C, \upsilon : C\Rightarrow D$, $(\sigma;\tau);\upsilon=\sigma;(\tau;\upsilon)$. 
\end{mythm}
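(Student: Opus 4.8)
The key insight is that both $(\sigma;\tau);\upsilon$ and $\sigma;(\tau;\upsilon)$ should arise as two ways of hiding moves from a single, symmetric "three-way interaction" living in the arena $((A\Rightarrow B)\Rightarrow C)\Rightarrow D$ (modulo the tagging conventions flagged in the Observation). So the plan is to define a ternary interaction that treats $\sigma$, $\tau$, and $\upsilon$ on an equal footing, show that it is independent of the order in which we associate the composition, and then show that each side of the equation is recovered by first hiding one interface and then the other.

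Let me sketch the steps in order.

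First, I would define the three-way interaction set
\[
\Gamma = \{\, p \in J_{M_A\cup M_B\cup M_C\cup M_D} \mid p\delm(M_C\cup M_D)\in {!}\sigma,\; p\delm(M_A\cup M_D)\in{!}\tau,\; p\delm(M_A\cup M_B)\in\upsilon \,\},
\]
where the iteration operators ${!}$ are taken over the appropriate initial moves, and the deletions project onto the pairs of adjacent arenas in which each strategy lives. The crucial point is that this definition is manifestly symmetric in how it constrains the three strategies: each strategy sees exactly the two arenas it touches, with the third and fourth hidden. I would then prove the central technical lemma, that hiding commutes with the deletion-based constraints, i.e.\ that deleting $M_B$ from $\Gamma$ (which fuses $\sigma$ and $\tau$) and then checking the $\upsilon$-constraint reproduces exactly the binary interaction ${!}(\sigma;\tau)\intero\upsilon$, and symmetrically that deleting $M_C$ first reproduces ${!}\sigma\intero(\tau;\upsilon)$. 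This rests on a key sub-fact about the deletion operator $\delm$: that it is functorial in the set being deleted, $p\delm X\delm Y = p\delm(X\cup Y)$ up to the pointer-extension bookkeeping recorded by the accompanying function $\pi$, so that the order of hiding is immaterial. Props.~\ref{lem:delplay} and~\ref{lem:selplay} guarantee that the intermediate projections are genuine plays, so that the iteration and interaction operators are being applied to legitimate strategies at each stage.

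Second, I would establish that iteration interacts correctly with composition, namely that the ${!}_{I_B}$ wrapping $\sigma$ inside $\sigma;\tau$ and the further ${!}$ applied when $\sigma;\tau$ is composed with $\upsilon$ amalgamate into the single ${!}\sigma$ appearing in $\Gamma$ — this is a matter of checking that hereditarily-justified threads of threads are again threads, which follows from the recursive structure of $\selm$ and the DAG structure of the arenas. With the central lemma in hand, both $(\sigma;\tau);\upsilon$ and $\sigma;(\tau;\upsilon)$ equal $\Gamma\delm(M_B\cup M_C)$, the total hiding of both interfaces, and equality follows immediately.

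The main obstacle will be the bookkeeping of justification pointers under iterated deletion, precisely the interaction between the pointer-extension function $\pi$ and the freshness conditions on iterated threads. The substantive content is the claim that hiding $M_B$ then $M_C$ yields the same extended pointers as hiding them simultaneously; this is where the elementary name-based formulation pays off, since (as the text emphasises) there is no re-indexing, but one must still verify that when a $D$-move points through a chain of $C$-, then $B$-, then $A$-moves, the successive pointer extensions compose associatively. I would isolate this as a standalone lemma about $\delm$ and discharge the remainder of the argument by routine unwinding of the definitions, exactly as the paper anticipates.
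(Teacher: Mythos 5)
Your plan is essentially the paper's own proof: the paper likewise reduces both sides, by deferring and fusing the two hidings into a single $\delm M_B\cup M_C$ (using that $\delm$ distributes and that no hidden moves remain), to one symmetric three-way interaction whose membership characterisation (Eqn.~\ref{eq:ass9}: thread-wise conditions $p\selm\{b\}\in\tau$ for initial $C$-moves and $p\selm\{d\}\in\sigma$ for initial $B$-moves within them) is exactly your $\Gamma$, and it rests on the same lemmas you isolate, namely commutation of $\selm$ with $\delm$ on moves outside the thread and amalgamation of the nested iterations into thread-of-thread conditions. The only difference is presentational: you posit the symmetric object $\Gamma$ up front and show both associations equal $\Gamma\delm(M_B\cup M_C)$, whereas the paper arrives at the same symmetric condition by direct unfolding of the definitions.
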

\begin{proof}
  Elaborating the definitions, the LHS is
  \begin{equation}
    \label{eq:ass1}
     (!_C((!_B\sigma\inter{AB,BC}\delm B)\inter{ABC,CD}(\nu\delm B))\delm C 
     \quad = (!_C((!_B\sigma\inter{AB,BC}\tau)\inter{ABC,CS}\nu)\delm BC 
  \end{equation}
  There are no $B$-moves in $ (!_C((!_B\sigma\inter{AB,BC}\delm B)$, so we can extend the scope of $\delm B$.

  Elaborating the definitions, the RHS is
  \begin{align}
    &(!_B\sigma\inter{AB,BD}(!_C\tau\inter{BC,CD}\nu)\delm C)\delm B  \\
    & \quad = (!_B(\sigma\delm C)\inter{AB,BD}(!_C\tau\inter{BC,CD}\nu)\delm C)\delm B \label{eq:ass2} \\
    & \quad = (!_B\sigma\inter{AB,BD}(!_C\tau\inter{BC,CD}\nu)\delm C)\delm BC \label{eq:ass3} 
  \end{align}
  Eq.~\ref{eq:ass2} is true because there are no C-moves in $\sigma$, so $\sigma\delm C= \sigma$.

  Eqn.~\ref{eq:ass3} is true because $\delm C$ distributes over concatenation.

  Therefore, it is sufficient to show the expressions in Eqns.~\ref{eq:ass1} and~\ref{eq:ass2} are equal.

  Let $p\in {!}_C((!_B\sigma\inter{AB,BC}\tau)\inter{ABC,CS}\nu$ is equivalent, by definition with
  \begin{align}
    & p\delm D\in {!_C}(!_B\sigma\inter{AB,BC}\tau)     \label{eq:ass4} \\
    \land\ & p\delm AB\in \nu \label{eq:ass5}
  \end{align}
  By elaborating the definitions:
  \begin{align}
    \text{Prop.~\ref{eq:ass4} }
    \Leftrightarrow \forall m a\cobnd b.m\in I_C \Rightarrow &\  p\delm D\selm\{b\}\in {!_B}\sigma\inter{AB,BC}\tau
    \label{eq:ass6} \\
    \Leftrightarrow\ & p\delm D\selm\{b\} \delm C \in {!_B}\sigma
    \label{eq:ass7} \\
    & \land  p\delm D\selm\{b\} \delm A \in \tau \Leftrightarrow p\selm\{b\}\delm A\in\tau
    \label{eq:ass8}
  \end{align}
  The equivalence in Eqn.~\ref{eq:ass8} holds because once we restrict to moves hereditarily justified by a C-move ($m\in I_C$), the removal of D-moves has no effect since the hereditarily justified play is restricted to arenas ABC.

  Elaborating the definition yet again,
  \begin{align}
    \text{Prop.~\ref{eq:ass7}} & \Leftrightarrow
    \forall n c\cobnd d \ssin p\delm D\selm\{b\}\delm C. n\in I_B \\
    & \Rightarrow
    p\delm D\selm \{b\}\delm C\selm\{d\} \in \sigma \Leftrightarrow
    p\selm\{d\} \in \sigma
  \end{align}
  Because restricting to the hereditarily justified play of a B-move ($ n\in I_B $) makes the other restrictions irrelevant.

  To summarise,
  \begin{multline} \label{eq:ass9}
    p\in {!_C}((!_B\sigma\inter{AB,BC}\tau)\inter{ABC,CS}\nu
    \Leftrightarrow \\
    \forall m a\cobnd b.m\in I_C \Rightarrow p\selm\{b\} \in \tau \land
    \forall n c\cobnd d \ssin p\selm\{b\}. n\in I_B \Rightarrow  p\selm\{d\} \in \sigma,
  \end{multline}

  This was the more difficult case.

  $p\in {!}_B\sigma\inter{AB,BD}(!_C\tau\inter{BC,CD}\nu)$ is equivalent to the same conditions as in Eqn.~\ref{eq:ass9} simply by elaborating the definitions, except that Prop.~\ref{eq:ass8} appears as $ p\delm A\selm\{b\} \in\tau $, but $\selm, \delm$ commute in this case. 
\end{proof}

Composition is not only associative but also monotonic with respect to the inclusion ordering:
\begin{mythm}[Monotonicity]\label{thm:mot}
  If $\sigma\subseteq \sigma'$ then for any $\tau, \upsilon$, $\sigma;\tau\subseteq\sigma';\tau$ and $\upsilon;\sigma\subseteq\upsilon;\sigma'$.
\end{mythm}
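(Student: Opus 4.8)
The plan is to observe that composition is assembled from three set-level operations---iteration ${!}_N(-)$, interaction $(-)\inter{M,N}(-)$, and deletion $(-)\delm X$---and to verify that each of these is monotone with respect to inclusion. Once this is established, both inclusions follow immediately, since a composite of monotone maps is monotone and inclusion is preserved slot by slot. There is no genuine obstacle here: the entire content is the monotonicity of three set-builder operations, and the only care needed is to track which argument is iterated and to note that deletion acts as a direct image rather than as a condition.

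First I would check iteration. From the defining comprehension ${!}_N\sigma =\{ p \mid \forall ma\cobnd b\ssin p.\ m\in N \Rightarrow p\selm\{b\}\in\sigma \}$, the only occurrence of $\sigma$ is in the consequent $p\selm\{b\}\in\sigma$. Replacing $\sigma$ by a larger set $\sigma'\supseteq\sigma$ only weakens this consequent, so every $p\in {!}_N\sigma$ remains in ${!}_N\sigma'$; hence ${!}_N\sigma\subseteq {!}_N\sigma'$.

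Next I would check interaction. Its definition $\sigma\inter{M,N}\tau = \{ p \mid p\delm (M\setminus N)\in\tau \land p\delm(N\setminus M)\in\sigma\}$ references each argument through a single membership conjunct, so enlarging either argument can only add sequences to the interaction; thus interaction is monotone separately in each of its two slots. Finally, deletion $(-)\delm X$ applied to a set is the direct image of that set under a fixed map on pointer sequences, and direct image is always monotone, so $S\subseteq S'$ gives $S\delm X\subseteq S'\delm X$.

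To conclude, I would read off the two inclusions from the definition $\sigma;\tau = ({!}_{I_B}\sigma\inter{M_{A\Rightarrow B},M_{B\Rightarrow C}}\tau)\delm M_B$. For the left argument, $\sigma\subseteq\sigma'$ passes through iteration, then the first slot of interaction, then deletion, yielding $\sigma;\tau\subseteq\sigma';\tau$. For the right argument, $\sigma\subseteq\sigma'$ enters only the second slot of interaction---iteration being applied to the other strategy---and then deletion, yielding $\upsilon;\sigma\subseteq\upsilon;\sigma'$.
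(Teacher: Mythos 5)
Your proposal is correct and takes exactly the route the paper intends: the paper's entire proof is the remark that ``all operations involved are monotonic,'' and you have simply verified this claim operation by operation (iteration, each slot of interaction, and deletion as a direct image), which is precisely the argument being elided.
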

The proof is immediate, since all operations involved are monotonic.

\subsubsection{Identity}\label{sec:identity}

The second challenge is the formulation of an appropriate notion of \emph{identity} strategy for any arena. A candidate for identity $\kappa_A:A_0\Rightarrow A_1$ is a strategy which immediately replicates $O$-moves from $A_0$ to $A_1$ and vice versa, while preserving the pointer structures -- a so-called \emph{copy-cat strategy}. 
\begin{mydef}[Copy-cat]
  For any arena $A$ we defined $\kappa_A$ as
  \begin{align*}
    \hat\kappa_A (q_1a\cobnd b) &= q_0 b\cobnd c \\
    \hat\kappa_A \left(p \cdot m_ia\cobnd b \cdot m_{j}c\cobnd d\cdot p'\cdot n_jd\cobnd e\right) &= n_{i}b\cobnd f 
  \end{align*}
  where $i\neq j\in \{0,1\}$.
\end{mydef}
Graphically, the strategy can be represented informally as:
\begin{center}
  \includegraphics[scale=1.2]{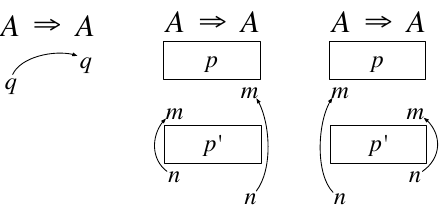}
\end{center}
where copies of the same arena ($A$), initial question ($q$) or move ($m,n$) are indicated by using the same variable. 

A copy-cat strategy has `the same behaviour' in both components:
\begin{mylem}
  $\kappa_A\delm A_0 = \kappa_A\delm A_1$, up to a relabeling of moves. 
\end{mylem}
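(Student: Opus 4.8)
The plan is to exploit the defining feature of copy-cat, namely that every O-move is immediately mirrored by P into the opposite component, carrying its justifier across unchanged. Consequently the moves of any $p\in\kappa_A$ fall into matched pairs $(\bar m, m)$ consisting of the same underlying move in $A_0$ and in $A_1$. Let $r$ denote the tag-swap relabeling that exchanges $A_0$- and $A_1$-moves. I would prove by induction on $|p|$ the invariant: if $p$ has even length then $p\delm A_0 = r(p\delm A_1)$ (up to the choice of fresh pointer names), while if $p$ has odd length the two sides differ by exactly the final, as-yet-unmatched O-move. The inductive step is just a reading of the two clauses of $\hat\kappa_A$: the response to an O-move in one component is the same move in the other, with the pointer transported along the pairing, which is precisely what keeps the two projections in step.

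Next I would upgrade this invariant to the stated set equality $\kappa_A\delm A_0 = r(\kappa_A\delm A_1)$. For even-length $p$ the invariant gives the matching elements directly. For odd-length $p=p_0\cdot m$, the interesting case is $m\in A_1$: here $p\delm A_0=(p_0\delm A_0)\cdot m$ carries an extra move relative to $p\delm A_1$, so I would use prefix-closure together with the totality of $\hat\kappa_A$ on legal O-moves to pass to the even extension $q=p_0\cdot m\cdot\bar m\in\kappa_A$ in which P copies $m$ into $A_0$ as $\bar m$. Then $p\delm A_0=(p_0\delm A_0)\cdot m = r\bigl((p_0\delm A_1)\cdot\bar m\bigr)=r(q\delm A_1)$, so the $A_0$-projection of $p$ is still realised by the $A_1$-projection of a genuine copy-cat play. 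The dual case $m\in A_0$ is easier, since there $p\delm A_0=p_0\delm A_0=r(p_0\delm A_1)$ is already matched by $p_0$ itself; running the same reasoning in both directions closes the set equality.

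Two bookkeeping points need care, and this is where the actual work sits. First, deletion reassigns justifiers through the accumulated function $\pi$, so I would check that the single cross-component pointer arising in copy-cat --- the one from the P-copy $q_0$ of an initial move back to the initial $q_1$, forced by the $inr(I_B)\times inl(I_A)$ clause of the arrow arena --- is handled correctly: deleting $A_1$ fires the case $\pi\mid b\mapsto\pi(a)$ on $q_1$, so $q_0$ inherits the dangling name and becomes an initial move of $A_0$, exactly mirroring the fact that $q_1$ is already initial on the other side. All remaining pointers are intra-component and survive $\delm$ unchanged. Second, the fresh names introduced by $\hat\kappa_A$ (written $\cobnd c$, $\cobnd f$) are arbitrary, so the equalities above hold only after a renaming of pointer names; this slack is absorbed by the phrase ``up to a relabeling'' together with equivariance of strategies.

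The main obstacle is therefore not conceptual but the off-by-one between O- and P-positions combined with the pointer-reassignment tracking: one must be careful that the projection of an odd-length play is witnessed by a \emph{different}, even-length play on the opposite side, and that the lone cross-component justification pointer is correctly re-routed by $\delm$. Once the matched-pairs invariant is stated cleanly, everything else is routine unwinding of the deletion and next-move definitions.
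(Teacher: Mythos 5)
Your matched-pairs picture and the final witnessing idea are the right ingredients, but the inductive invariant you build the proof on is false in this model, and the failure is exactly the point of this paper's setting. You assume that in $\kappa_A$ ``every O-move is immediately mirrored by P,'' so that a play has at most one unmatched move, sitting at the end, and parity of $|p|$ tells you whether the projections agree. But the model here has \emph{no alternation constraint}, and strategies are \emph{O-closed}: whenever $p\in\kappa_A$ and $p\cdot ma\cobnd b$ is a legal play ending in an O-move, that extension is in $\kappa_A$ regardless of how many of P's copy-responses are still pending. Concretely, take $A=\nata$ and
$p = q_1a\cobnd b\cdot q_0b\cobnd c\cdot 5_0\,c\cdot 7_0\,c \in\kappa_\nata$:
O has answered the question $q_0$ twice (legal here --- no bracketing, no alternation) and P has copied nothing. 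This play has even length, yet $p\delm A_0 = q_1a\cobnd b$ while $p\delm A_1 = q_0\cdot 5_0\cdot 7_0$, so the even-length clause of your invariant fails, and the projections differ by two moves neither of which is final. Worse, P may later discharge pending copies \emph{out of order} (after $q_1q_0m_0m'_0$ the next-move function offers both $m_1$ and $m'_1$), so the two projections of a play can differ at interior positions and need not even be prefix-comparable; your odd-length clause, which repairs a single trailing O-move, cannot absorb this.

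The repair is to drop the parity induction and apply your ``witnessed by a different play'' step wholesale, move by move. Given $p\in\kappa_A$ with $s=p\delm A_0$, build the fully synchronised play $p'$: run through the moves of $s$ in order, let O play each one in the component where it is an O-move of the composite arena, and let P copy it immediately into the other component, transporting justifiers along the pairing. Both clauses of $\hat\kappa_A$ verify that $p'\in\kappa_A$, and one checks $p'\delm A_0 = s$ and $p'\delm A_1 = r(s)$ up to fresh names, which closes both inclusions at once. In fact this construction shows something cleaner that implies the lemma: \emph{every} play of $A$ arises this way, so $\kappa_A\delm A_0$ and $\kappa_A\delm A_1$ each equal all of $\Play_A$ up to re-tagging --- one inclusion by Prop.~\ref{lem:delplay} for the $A_0$-deletion, and for the $A_1$-deletion by precisely the pointer-rerouting analysis in your third paragraph (the graft pointer of $q_0$ is redirected by $\pi$ to a dangling name, making $q_0$ initial). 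Your two bookkeeping points --- the cross-component pointer for initial moves and the absorption of fresh-name slack into equivariance --- are correct and carry over unchanged; it is only the alternation-based skeleton that must go.
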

\begin{mythm}
$\kappa_A : A_0\Rightarrow A_1$ is a strategy. 
\end{mythm}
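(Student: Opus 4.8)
The plan is to verify that the copy-cat strategy $\kappa_A$, defined via its next-move function, satisfies the three defining conditions of a strategy (prefix-closure, $O$-closure, equivariance). By the earlier proposition stating that the least set generated by a next-move function is automatically a strategy, prefix-closure holds by construction, $O$-closure holds because we always adjoin all legal $O$-moves, and equivariance holds from general nominal principles. So the real content is not those three closure conditions per se, but rather the well-definedness claim implicit in calling $\kappa_A$ a next-move function at all: I must check that whenever $\hat\kappa_A$ prescribes a move $ma\cobnd b$ after a play $p$, the extended sequence $p\cdot ma\cobnd b$ is in fact a valid \emph{play} of $A_0\Rightarrow A_1$. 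That is, each prescribed $P$-move must be properly justified according to the enabling relation of the arrow arena, and the freshness side-condition on the new pointer name must be respected.

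First I would unwind the two clauses of the definition. The initial clause responds to an opening $O$-question $q_1$ (an initial move of $A_1$, hence initial in $A_0\Rightarrow A_1$) by playing $q_0$; here I check that $q_0$ is correctly an initial move of $A_0$, that in the arrow arena $inr(I_{A_1})$ enables $inl(I_{A_0})$, and that the justifier name is threaded correctly ($q_0$ points back to the name $b$ freshly introduced by $q_1$, and introduces its own fresh $c$). For the general clause, the pattern $p\cdot m_ia\cobnd b\cdot m_jc\cobnd d\cdot p'\cdot n_jd\cobnd e$ records that an earlier $P$-copy $m_i$ in component $i$ was mirrored by an $O$-copy $m_j$ in the opposite component $j$, and that $O$ has now played $n_j$ justified by $m_j$ (i.e.\ pointing to $d$); the response $n_ib\cobnd f$ plays the matching copy $n_i$ in component $i$, justified by the corresponding earlier move $m_i$ (pointing to $b$). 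The key verification is that since $m_i\vdash_{A_i} n_i$ holds in the original arena $A$ (this is exactly what the $O$-move $n_j$ with $m_j\vdash_{A_j}n_j$ witnessed, and enabling is preserved under the two tagged copies), the justification of the $P$-response is legal in the arrow arena, and that $f$ is chosen fresh.

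The step I expect to be the main obstacle is confirming that the two-clause next-move function is genuinely \emph{total and unambiguous} on exactly the $P$-turn positions of $\kappa_A$-plays — that is, that every play reachable by alternating the prescribed $P$-moves with arbitrary legal $O$-moves matches exactly one of the two definitional patterns, with the pointer-matching indices $i\neq j$ determined uniquely. This requires an auxiliary invariant, proved by induction on the length of the play: every reachable $\kappa_A$-play, once both components are stripped to their respective arenas, induces the same underlying $A$-play in $A_0$ and $A_1$ up to relabeling, with $O$-moves in one component always immediately copied to $P$-moves in the other. This invariant is precisely the content of the preceding lemma $\kappa_A\delm A_0=\kappa_A\delm A_1$, so I would establish that invariant first and then read off well-definedness of the next-move function from it. Once the invariant is in hand, the pattern-matching is forced and the justification/freshness checks reduce to the routine unwinding of the arrow-arena definition, after which appeal to the next-move-function proposition closes the argument.
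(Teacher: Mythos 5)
The paper itself leaves this proposition unproved (it falls under the earlier remark that straightforward proofs are left as exercises), so your proposal must stand on its own, and its core is indeed the intended argument: reduce everything to the proposition that the least set generated by a next-move function is automatically a strategy (prefix-closure by construction, $O$-closure by adjoining all legal $O$-moves, equivariance from nominal principles), so that the only real content is that $\hat\kappa_A$ is a legitimate next-move function, i.e.\ that each prescribed move extends a play to a play. Your local checks are correct: for the initial clause, $inr(I_{A_1})$ enables $inl(I_{A_0})$ in the arrow arena and $q_0$ is a $P$-move there; for the general clause, the legality of the $O$-move $n_j$ witnesses $m\vdash_A n$, which transfers to the other tagged copy, the polarity flip between components makes $n_i$ a $P$-move, the name $b$ locates the correct justifier $m_i$, and $f$ is fresh. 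Had you stopped there, the proof would be complete.

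Your final paragraph, however, rests on two misreadings, and the step you plan there would fail. First, a next-move function here need be neither total nor single-valued: its codomain is $\mathcal P(P_A\times\mathbb A^2)$, and the paper explicitly uses empty values ($\hat\sigma_{\div}(qq_1m_1q_20_2)=\emptyset$) and multi-valued ones ($\sigma_{\mathit{chooseb}}$), so there is nothing to prove about totality or unambiguity; on plays matching neither clause, $\hat\kappa_A$ simply returns $\emptyset$, and where the clause does match, freshness of names makes $d$ locate $m_j$ (and hence $m_i$, its immediate predecessor) uniquely. Second, the auxiliary invariant you propose is \emph{false} in this model: alternation is one of the relaxed constraints, so $O$-closure forces $\kappa_A$ to contain plays where $O$ moves repeatedly before any copying occurs --- e.g.\ in $\nata_0\Rightarrow\nata_1$ the play $q_1\cdot q_0\cdot 5_0\cdot 7_0$ lies in $\kappa_A$, and its two component restrictions are $q$ and $q\cdot 5\cdot 7$, so ``$O$-moves in one component always immediately copied to $P$-moves in the other'' does not hold of reachable plays. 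Relatedly, the lemma $\kappa_A\delm A_0=\kappa_A\delm A_1$ is an equality of \emph{sets} of restricted plays, not a per-play synchrony statement, so it cannot deliver your invariant; indeed this very slack (pending uncopied $O$-moves and the resulting reorderings under composition) is why the paper must observe that $\kappa_A$ fails to be an identity and pass to the Karoubi envelope. An induction on play length attempting your invariant breaks at the first $O$-closure step. Fortunately, as noted, the invariant is unnecessary: the local checks in your first two paragraphs already suffice.
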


However, $\kappa_A $ is, perhaps surprisingly, not a unit for composition. Consider for example $A=unit$ and the strategy $\rho:unit''\Rightarrow unit'=\{q'q''a'a''\}$. This strategy, whereby the argument acknowledges termination after the body of the function is akin to a process-forking call. The interaction of $\rho;\kappa$ is shown below:

\begin{center}
  \includegraphics[scale=1.2]{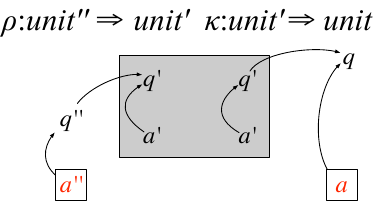}
\end{center}
 
The synchronisation in the shared arena and the ordering of moves in $\kappa_\mathit{unit}$ impose no particular order between the moves highlighted in the diagram  ($a, a''$), because of the inherent nondeterminism of strategy composition. Therefore the play $qq''a''a$ is in the composition but absent from the original strategy $\rho$.

However, even though $\kappa_A$ is not in general an identity for composition, it is \emph{idempotent}, i.e. it is an identity when composed with itself. 
\begin{mylem}
The $\kappa_A$ strategy is \emph{idempotent}, that is for any arena $A$ we have that $\kappa_A=\kappa_A;\kappa_A.$
\end{mylem}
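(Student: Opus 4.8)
The plan is to prove the set equality $\kappa_A = \kappa_A;\kappa_A$ by double inclusion, working directly with the interaction underlying the composite. Writing the left factor as $\kappa_A : A_0 \Rightarrow A_1$ and the right factor as $\kappa_A : A_1 \Rightarrow A_2$, the Composition definition gives
\[
\kappa_A;\kappa_A = \bigl({!}_{I_{A_1}}\kappa_A \intero \kappa_A\bigr)\delm M_{A_1},
\]
an interaction over $(A_0\Rightarrow A_1)\Rightarrow A_2$ from which the shared arena $A_1$ is hidden (the equality with $\kappa_A$ then being read up to relabeling $A_2$ as $A_1$). The guiding intuition is telescoping: the left copy-cat forces the $A_1$-component to mirror $A_0$, the right copy-cat forces $A_1$ to mirror $A_2$, so the two mirrorings compose and, once $A_1$ is deleted, $A_0$ mirrors $A_2$ directly, which is exactly a copy-cat on $A_0 \Rightarrow A_2$.

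For the inclusion $\kappa_A \subseteq \kappa_A;\kappa_A$ I would start from an arbitrary copy-cat play $r$ between $A_0$ and $A_2$ and exhibit a witnessing interaction sequence. Each move of $r$ is reproduced together with a freshly-named copy in $A_1$, inserted in lockstep: an $A_2$-move is immediately shadowed by its copy in $A_1$, which is in turn shadowed in $A_0$ (and symmetrically for $A_0$-initiated chains). By the copy-cat next-move function and the lemma stating $\kappa_A\delm A_0 = \kappa_A\delm A_1$, both projections $p\delm M_{A_0}$ and $p\delm M_{A_2}$ of the constructed sequence $p$ are copy-cat plays, so $p$ lies in the interaction; the iteration ${!}_{I_{A_1}}$ together with equivariance supplies the fresh names needed when an $A_1$-initial question is opened more than once. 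Deleting $M_{A_1}$ from $p$ then returns $r$.

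The reverse inclusion $\kappa_A;\kappa_A \subseteq \kappa_A$ is the substantive one. Given an interaction play $p$, I would use that $p\delm M_{A_0}$ is a play of the right copy-cat and $p\delm M_{A_2}$ a play of the left copy-cat; the mirroring property of each then identifies the $A_0$-, $A_1$- and $A_2$-components move-for-move and, crucially, pointer-for-pointer. When $M_{A_1}$ is deleted, the Deletion operation extends each pointer that factored through an $A_1$-move (from an $A_2$-move, through its $A_1$-copy, to the corresponding $A_0$-move), so the resulting sequence carries precisely the mirrored pointer structure of a copy-cat on $A_0\Rightarrow A_2$. It then remains to verify the copy-cat next-move discipline, namely that each P-answer duplicates the immediately preceding O-move with the mirrored pointer.

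The main obstacle is exactly the nondeterminism that the preceding \emph{bad} example exploits to defeat the unit law: the interaction does not a priori pin down the interleaving of the $A_1$-moves relative to the external moves, so the hidden sequence could in principle contain a reordering absent from plain copy-cat. The crux is to show this cannot happen here because \emph{both} factors are copy-cats. Concretely, the constraints $p\delm M_{A_0}\in\kappa_A$ and $p\delm M_{A_2}\in\kappa_A$ jointly force, within each causal chain, the strict \textbf{O-then-copy} ordering, since any swap of a copied move past the move that copies it would violate one of the two projection conditions. I would make this precise by an induction on $p$, showing that each maximal chain of copies is contiguous up to moves of independent threads (those separated by $\selm$ on distinct initial names), so that after hiding no P-move can be detached from the O-move it duplicates. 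This forced ordering, rather than the pointer bookkeeping, is where the real content of idempotency resides, and it is precisely the feature that the out-of-turn answering of $\rho$ lacks, explaining why copy-cat is idempotent yet fails to be a unit.
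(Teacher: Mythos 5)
Your proposal is correct and is essentially the paper's own argument: the paper proves this lemma only by remarking that it is ``similar to that of Prop.~\ref{lem:synch}'' (Synchronisation), whose case analysis --- the saturated copy-cat turns $O$-moves into later $P$-copies, so an adjacent $O$-move followed by a $P$-move can never be permuted while every other adjacent pair can --- is precisely the forced ``O-then-copy'' ordering you identify as the crux of the inclusion $\kappa_A;\kappa_A\subseteq\kappa_A$, and it is likewise what explains why $\rho$ defeats the unit law but not idempotency. Your double-inclusion scaffolding (the lockstep witness for $\kappa_A\subseteq\kappa_A;\kappa_A$, with ${!}_{I_B}$ and equivariance supplying fresh names) merely makes explicit what the paper's one-line reference elides, so this counts as the same approach, worked out in more detail.
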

The proof is similar to that of Prop.~\ref{lem:synch}.

Identifying an idempotent morphism means that we can now construct a proper category using the so-called \emph{Karoubi envelope} construction~\citep{BALMER2001819}.

\begin{myprop}
  There exists a category of games in which objects are arenas $A$, identities at $A$ are \emph{copy-cats} $\kappa_A$ and morphisms $\sigma^\dag:A\to B$ are \emph{saturated strategies} $\sigma^\dag=\kappa_A;\sigma;\kappa_B$, where $\sigma:A\Rightarrow B$ is a strategy. 
\end{myprop}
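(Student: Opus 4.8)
The plan is to recognise this statement as an instance of the \emph{Karoubi envelope} construction applied to the family of idempotents $\{\kappa_A\}$, one for each arena, and to verify the category axioms by reducing everything to three facts already established: that strategy composition is associative, that each $\kappa_A$ is idempotent ($\kappa_A = \kappa_A;\kappa_A$), and that the composite of two strategies is again a strategy (Prop.~\ref{thm:comp}). Composition of morphisms will be ordinary strategy composition, so no new composition operation needs to be introduced; the work lies entirely in checking closure and the unit laws.

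First I would check that the saturated morphisms are closed under composition. Given saturated $\sigma^\dag = \kappa_A;\sigma;\kappa_B : A\to B$ and $\tau^\dag = \kappa_B;\tau;\kappa_C : B\to C$, I would compute $\sigma^\dag;\tau^\dag$ by reassociating (using associativity) and collapsing the central $\kappa_B;\kappa_B$ to a single $\kappa_B$ via idempotence, obtaining $\kappa_A;(\sigma;\kappa_B;\tau);\kappa_C$. Since $\sigma;\kappa_B;\tau$ is a strategy by repeated application of Prop.~\ref{thm:comp}, this exhibits the composite as a saturated morphism $A\to C$. Thus composition is well-defined on the chosen class of morphisms.

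Next I would verify that $\kappa_A$ is itself a morphism $A\to A$ in the new category, i.e.\ that it is saturated: taking $\sigma = \kappa_A$, idempotence gives $\kappa_A;\kappa_A;\kappa_A = \kappa_A;\kappa_A = \kappa_A$, so $\kappa_A = \kappa_A;\kappa_A;\kappa_A$ is of the required form. The unit laws then follow by the same two-move manoeuvre: for $\sigma^\dag = \kappa_A;\sigma;\kappa_B$, associativity lets me regroup $\kappa_A;\sigma^\dag$ as $(\kappa_A;\kappa_A);\sigma;\kappa_B$ and $\sigma^\dag;\kappa_B$ as $\kappa_A;\sigma;(\kappa_B;\kappa_B)$, and idempotence collapses each doubled copy-cat, returning $\sigma^\dag$ in both cases. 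Associativity of composition in the new category is inherited verbatim from the associativity proposition for strategies, since the morphisms compose exactly as strategies.

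The crux --- and the only place where the construction is not a triviality --- is the interplay with the earlier observation that $\kappa_A$ is \emph{not} a genuine unit for composition. The point to stress is that idempotence is precisely the property that rescues the situation: by pre- and post-composing every strategy with the relevant copy-cat we land in the sub-collection of \emph{saturated} strategies, and on that sub-collection the absorbing effect of $\kappa_A;\kappa_A = \kappa_A$ makes the copy-cats behave as true identities. I expect the care to lie not in any single calculation, each of which is a one-line reassociation, but in confirming that no morphism escapes the saturated class under composition and that the identity exhibited is the prescribed $\kappa_A$ rather than some other idempotent --- both of which are settled by the closure computation above.
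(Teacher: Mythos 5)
Your proposal is correct and takes exactly the paper's route: the paper proves idempotence of $\kappa_A$ and then simply invokes the Karoubi envelope construction, leaving implicit precisely the verification you carry out (closure of saturated strategies under composition via Prop.~\ref{thm:comp}, saturation of $\kappa_A$ itself, unit laws by reassociation and collapsing $\kappa_A;\kappa_A=\kappa_A$, and associativity inherited from strategy composition). Your write-up is just that standard verification spelled out in full.
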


We call strategies $\sigma^\dag$ \emph{saturated} because through composition with the copy-cat strategies $\kappa$ new behaviours are added. We will discuss this further when we talk about definability (Sec.~\ref{sec:definability}).

\subsubsection{Cartesian closed structure}

To be able to model at least call-by-name lambda calculus the category above needs to be Cartesian closed. And, indeed,

\begin{mythm}[CCC]\label{thm:ccc}
  The category of games and saturated strategies is Cartesian closed:
  \begin{description}
  \item[Terminal object] is the arena $I$ with no moves $M_I=\emptyset$
  \item[The product] of two arenas $A_1, A_0$ is the arena $A_1\times A_0$ with projections $\pi_i:A_1\times A_0\rightarrow A_i'$, 
    $\pi_i = \kappa_{A_i}.$
  \item[The exponential] of two arenas $A,B$ is the arena $A\Rightarrow B$ with
    \begin{description}
    \item[Evaluation morphism] $ev_{A,B}:(A_0\Rightarrow B_0)\times A_1\rightarrow B_1$ with $ev_{A,B}=\varepsilon_{A,B}^\dag$ where
      \begin{align*}
        \varepsilon_{A,B} (pm) = \hat\kappa_A(pm) \text{ if } m\in A_i\\
        \varepsilon_{A,B} (pm) = \hat\kappa_B(pm) \text{ if } m\in B_i.
      \end{align*}
    \item[Transpose] of any strategy $\sigma:A\times B\rightarrow C$ is the strategy $\lambda\sigma:A\rightarrow (B\Rightarrow C)$ where $\lambda$ is a re-tagging of moves. 
    \end{description}
  \end{description}
\end{mythm}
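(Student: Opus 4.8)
The plan is to verify, one universal property at a time, the data that makes a category Cartesian closed, working throughout inside the Karoubi envelope of Prop.~\ref{thm:comp}'s category so that $\kappa_A$ is a genuine identity and every morphism is a saturated strategy. First I would dispatch the terminal object. Since $M_I=\emptyset$, the arrow arena $A\Rightarrow I$ has no initial moves (its initial set is $inr(I_I)=\emptyset$), so no play can be started and $\{\epsilon\}$ is its only strategy; this strategy is trivially saturated, and hence there is exactly one morphism $A\to I$ for every arena $A$. This is immediate.

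Next I would treat the product. The key structural fact is that $A_1\times A_0$ is the disjoint union of the two DAGs, so by the same DAG argument underlying Prop.~\ref{lem:delplay} the restriction of any play of $C\Rightarrow(A_1\times A_0)$ to either summand is again a play. I would therefore define the pairing $\langle f_1,f_0\rangle:C\to A_1\times A_0$ of saturated morphisms $f_1:C\to A_1$ and $f_0:C\to A_0$ as the saturation of the union of their play-sets, routed into the respective summands, and then check the two equations $\pi_i\circ\langle f_1,f_0\rangle=f_i$ together with uniqueness of the mediating morphism. Because each projection is the copy-cat $\pi_i=\kappa_{A_i}$, these equations reduce, via associativity and the idempotency of $\kappa$, to the claim that composing with $\kappa_{A_i}$ and hiding the interface reproduces the $A_i$-component; this in turn follows from the lemma that $\kappa_A$ has the same behaviour in both of its copies. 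Uniqueness holds because a saturated mediating morphism is pinned down by its restriction to each summand.

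Finally I would handle the exponential. The Currying isomorphism supplies a node-relabeling bijection between plays of $(A\times B)\Rightarrow C$ and plays of $A\Rightarrow(B\Rightarrow C)$; the transpose $\lambda\sigma$ is precisely this retagging applied to $\sigma$, and it carries strategies to strategies and saturated strategies to saturated ones. It then remains to verify the $\beta$-law $ev_{B,C}\circ(\lambda\sigma\times id_B)=\sigma$ and the uniqueness of the transpose, the latter being immediate from the bijection on plays combined with saturation. I would set up $ev_{B,C}=\varepsilon_{B,C}^{\dag}$ exactly as in the statement, noting that it is assembled entirely from the copy-cat next-move functions $\hat\kappa_A$ and $\hat\kappa_B$.

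The hard part will be the $\beta$-law. Because composition internalises and hides the interface arena, and—as flagged in Sec.~\ref{sec:identity}—this hiding introduces genuine nondeterminism at shared moves, one must confirm that the copy-cat relays packaged inside $ev_{B,C}$ reconstruct \emph{exactly} the plays of $\sigma$ after hiding, and that saturation contributes nothing spurious. The clean route is to exploit that $ev_{B,C}$, the projections, and hence the entire composite $ev_{B,C}\circ(\lambda\sigma\times id_B)$ are built solely out of copy-cats; the target equality then collapses, using associativity (already proved) together with the idempotency and same-behaviour lemmas for $\kappa$, into an equality of copy-cat composites, which can be computed move-by-move straight from the definition of $\hat\kappa_A$. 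The calculation is elementary but must be carried out at the level of \emph{saturated} strategies—that is, modulo pre- and post-composition with copy-cats—and it is precisely this saturation bookkeeping, rather than any single step, that carries the real content of the proof.
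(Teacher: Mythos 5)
Your proposal is correct and takes essentially the same route as the paper: the terminal object via the absence of enabled moves in $A\Rightarrow I$, the pairing as the union of the two saturated strategies with the projection equations reduced to copy-cat behaviour (each play of $C\Rightarrow (A_1\times A_0)$ living wholly in one summand), and the transpose as the retagging induced by the coproduct-associativity isomorphism with evaluation assembled from $\hat\kappa_A$ and $\hat\kappa_B$. The paper in fact leaves the exponential verification ``as an exercise to the reader,'' so your sketch of the $\beta$-law --- reducing it, via associativity and the idempotency and same-behaviour properties of $\kappa$, to a move-by-move copy-cat computation carried out modulo saturation --- is just a more explicit rendering of the same argument (modulo one slip of phrasing: the composite $ev_{B,C}\circ(\lambda\sigma\times id_B)$ is not itself built solely of copy-cats, since it contains $\lambda\sigma$; what your plan actually uses, correctly, is that everything surrounding $\sigma$ is a copy-cat).
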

\begin{proof}
  \begin{description}
  \item[Terminal object:] The only strategy $!:A\rightarrow I$ contains the empty play since no move in $A$ is enabled.
  \item[Product:] For every object B and pair of morphisms $\sigma_i^\dag : B \to A_i$ the product of the two morphisms $\langle \sigma^\dag_1, \sigma^\dag_0\rangle=\sigma^\dag_1\cup\sigma^\dag_0$. The fact that $\sigma^\dag_i=\langle\sigma^\dag_1,\sigma^\dag_0 \rangle;\pi_i$ is immediate because $\pi_i$ is essentially a (saturated) copy-cat which preserves $\sigma_i$ and removes $\sigma_{1-i}$ as none its moves are enabled.
  \item[Exponential:] Evaluation is a combination of copy-cat strategies, on the $A$ and $B$ components, respectively. The \emph{re-tagging} defining $\lambda$ is induced by the two isomorphic ways in which the coproduct can associate. We leave the details as an exercise to the reader. 
  \end{description}
\end{proof}

\subsection{Interpreting PCF}

Prop.~\ref{thm:ccc} along with Prop.~\ref{thm:mot} show that the category of saturated games is a model for call-by-name lambda calculus with recursion. For the sake of simplicity we will leave recursion aside and concentrate on the recursion-free language. The interpretation is the standard one.

Let $\theta$ stand for the types of PCF. Let $\beta$ be the base-types of the language (naturals, booleans, unit, etc.) and let $\theta\rightarrow\theta$ be the only type-forming construct. The constants of the language are the base-type constants together with base-type operations (arithmetic and logic) and if-then-else. The other term formers are lambda-abstraction ($\lambda x.t$) and application ($t\, t'$). Let us write $\mathit{fv}(t)$ for the free variables of a term $t$, defined as usual. 

We use typing judgments of the form $\Gamma\vdash t:\theta$, with $\Gamma=x_0:\theta_0,\ldots,x_n:\theta_n$ a set of variable type assignments, $\mathit{fv}(t)\subseteq \mathrm{dom}(\Gamma)$. The judgment is read as ``if variables $x_i$ have types $\theta_i$ as given by $\Gamma$, then $t$ has type $\theta$.'' The judgments are checked using the following rules, expressed in natural deduction style:

\begin{center}
  \centerAlignProof
  \AxiomC{}
  \UnaryInfC{$\Gamma,x:\theta\vdash x:\theta$}
  \DisplayProof\quad
  \AxiomC{$\Gamma\vdash t':\theta\rightarrow\theta'$}
  \AxiomC{$\Gamma\vdash t:\theta$}
  \BinaryInfC{$\Gamma\vdash t'\ t:\theta'$}
  \DisplayProof\quad
  \AxiomC{$x:\theta,\Gamma\vdash t':\theta'$}
  \UnaryInfC{$\Gamma\vdash \lambda x.t':\theta\rightarrow \theta'$}
  \DisplayProof
\end{center}

\newcommand{\sbr}[1]{\llbracket{#1}\rrbracket}

The interpretation function is written as $\sbr-$. Types are interpreted as arenas:
\[
  \sbr{bool}= \mathit{Bool}, etc. \qquad
  \sbr{\theta\rightarrow\theta'}=\sbr\theta\Rightarrow\sbr{\theta'}.
\]
Variable type assignments are interpreted as products:
\[
\sbr{\emptyset}=I,\qquad
\sbr{x:\theta, \Gamma}=\sbr\theta\times\sbr\Gamma.
\]
Terms $\Gamma\vdash t:\theta$ are interpreted as strategies on arena $\sbr\Gamma\Rightarrow\sbr\theta$, inductively on the (unique) derivation of the type judgment. The interpretation of the constants as strategies was already given in the preceding sections. Variables, abstraction and application are interpreted canonically using the categorical recipe:
\begin{align*}
  \sbr{x:\theta,\Gamma\vdash x:\theta}&=\pi_0 :\sbr\theta\times\sbr\Gamma\rightarrow \sbr\theta\\
  \sbr{\Gamma\vdash t'\ t:\theta'}&=
  \langle \sbr{\Gamma\vdash t':\theta\rightarrow\theta'},
  \sbr{\Gamma\vdash t:\theta}\rangle;ev_{\sbr\theta,\sbr{\theta'}}\\
  \sbr{\Gamma\vdash \lambda x.t':\theta\rightarrow \theta'}&=
  \lambda\sbr{x:\theta,\Gamma\vdash t':\theta'}.
\end{align*}

\section{Definability}\label{sec:definability}

The saturated unrestricted model described here contains many behaviours which are not syntactically definable in PCF. A simple example would be the non-deterministic coin-flip strategy ($\sigma_{\mathit{flip}}=\strat\{q\cdot\mathit{true}, q\cdot\mathit{false}\}$). In this section we will determine a syntactic extension for PCF which restores definability. One may think of it as an `axiomatisation' of the game-semantic model. 

The saturation of strategies might be worrying since it involves a loss of control over the order in which certain moves occur. Can we still have languages with sequencing or synchronisation? The property below gives a positive answer.
\begin{mylem}[Synchronisation]\label{lem:synch}
  Let $\sigma=\strat\{p\cdot ma \cobnd b\cdot n c \cobnd d\cdot p'\} : A\Rightarrow B$. Then,
  $p\cdot n c \cobnd d\cdot m a \cobnd b\cdot p'\in\sigma^\dag$ if and only if $m\in P_{A\Rightarrow B}$ or $n\in O_{A\Rightarrow B}$.
  This also holds if $a=c$.
\end{mylem}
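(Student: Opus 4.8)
The plan is to read the statement as a single-step instance of the guiding slogan behind $\sigma^\dag=\kappa_A;\sigma;\kappa_B$: saturation by copy-cats closes $\sigma$ under exactly one adjacent reordering, namely one may push an earlier P-move later and pull a later O-move earlier, and nothing else. First I would unfold the (associative) triple composition as a single \emph{interaction} over the four layers $A_0,A_1,B_1,B_2$ of external and internal copies of $A$ and $B$, in the sense of the Interaction, Iteration and Composition definitions, whose external projection (deleting the hidden $A_1,B_1$ moves, as in Props.~\ref{lem:delplay} and~\ref{lem:selplay}) is the play under scrutiny. Every external occurrence $x$ has an internal twin $\bar x$ that $\sigma$ actually sees. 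The crucial local fact about copy-cats is that the only ordering a copy-cat imposes between a move and its twin is that the $O$-polarity copy precedes the $P$-polarity copy; between occurrences in distinct threads or distinct components it imposes nothing, which is precisely the implicit nondeterminism noted after Def.~\ref{def:stratintens} and exhibited by the $\rho;\kappa$ example. Thus, writing $<_w$ for the order in the interaction $w$, an external O-move $x$ satisfies $x<_w\bar x$, an external P-move satisfies $\bar x<_w x$, and the internal twins read off $w$ form a play of $\sigma$ in their $w$-order.

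The heart of the argument is then a small partial-order computation on the four occurrences $\{m,\bar m,\bar n,n\}$. Since $\sigma=\strat\{p\cdot ma\cobnd b\cdot nc\cobnd d\cdot p'\}$ is the \emph{least} strategy over that single generator, every $\sigma$-play containing both twins orders them $\bar m<_w\bar n$; the copy-cat relays give $m<_w\bar m$ when $m\in O_{A\Rightarrow B}$ (resp.\ $\bar m<_w m$ when $m\in P_{A\Rightarrow B}$) and $n<_w\bar n$ when $n\in O_{A\Rightarrow B}$ (resp.\ $\bar n<_w n$ when $n\in P_{A\Rightarrow B}$). The swapped external play $p\cdot n\cdot m\cdot p'$ is realizable by some $w$ exactly when a linearization of these constraints places $n$ before $m$, i.e.\ exactly when the transitive closure does \emph{not} force $m<_w n$. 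Inspecting the four constraints, the only chain that can force $m<_w n$ is $m<_w\bar m<_w\bar n<_w n$, whose first link requires $m\in O_{A\Rightarrow B}$ and whose last link requires $n\in P_{A\Rightarrow B}$. Hence the swap is realizable if and only if we are \emph{not} in the case $m\in O_{A\Rightarrow B}\land n\in P_{A\Rightarrow B}$, equivalently $m\in P_{A\Rightarrow B}$ or $n\in O_{A\Rightarrow B}$, which is exactly the claimed condition; this single computation settles both directions at once.

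For the \emph{if} direction I would then make this concrete: whenever no forcing chain exists, I exhibit an actual interaction realizing the external order $n\cdot m$ (relaying the surrounding context $p,p'$ in place, since the reordering is purely local to the adjacent pair), check that the internal twin sequence is genuinely a $\sigma$-play, and invoke equivariance to keep the introduced names $\cobnd b,\cobnd d$ fresh, exactly as in the iteration examples; after hiding, the result is $p\cdot nc\cobnd d\cdot ma\cobnd b\cdot p'\in\sigma^\dag$ up to renaming. For the \emph{only if} direction the same partial order gives the chain $m<_w\bar m<_w\bar n<_w n$, so $m$ precedes $n$ externally in every witnessing $w$, contradicting the assumed order.

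The step I expect to be the genuine obstacle is the internal-order claim $\bar m<_w\bar n$, that is, that $\strat$ applied to the single generator never produces the $P$-occurrence $\bar n$ before $\bar m$. This is where the hypothesis that $n$ is a P-move is really used: prefix-closure only truncates, equivariance only renames, and $O$-closure only appends $O$-moves and so never introduces $\bar n$ independently, so the relative order of $\bar m$ and $\bar n$ is pinned to that of the generator. Everything else is bookkeeping about pointers surviving the deletions in composition. Finally, the clause ``$a=c$'' needs only a remark: sharing a justifier does not make either of $m,n$ justify the other, so under the standing assumption $c\neq b$ (i.e.\ $n$ is not justified by $m$, which is what makes the reordered sequence a well-formed pointer sequence at all) the chaining argument applies verbatim whether or not $a=c$.
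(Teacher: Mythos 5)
Your proposal is correct and is essentially the paper's own argument: both reduce the claim to the copy-cat relay discipline --- the $O$-polarity copy of each move must precede its $P$-polarity twin, while distinct relays are unordered --- so that the swap is blocked exactly by the chain $m < \bar m < \bar n < n$, which requires $m\in O_{A\Rightarrow B}$ and $n\in P_{A\Rightarrow B}$, and is realizable in every other polarity combination because the copy-cat accepts either relay order. Your partial-order/linearization formulation is a uniform repackaging of the paper's explicit case analysis (case 1: $m\in O$, $n\in P$; case 2: both $O$; ``all other cases are similar''), and it usefully makes explicit two steps the paper leaves implicit, namely the leastness argument pinning the internal order $\bar m < \bar n$ (prefix-closure truncates, equivariance renames, $O$-closure only appends $O$-moves) and the well-formedness remark covering the $a=c$ clause.
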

\begin{proof}
  We need to consider all combinations for $m$ and $n$ to be $O$ or $P$ moves in $A\Rightarrow B$, and also whether they occur in $A$ or in $B$. Because $\kappa$ always copy-cats $O$ moves to $P$ moves, and because in this case the $P$ move occurs after the $O$ move ultimately it does not matter whether $\kappa_A$ or $\kappa_B$ does the copying, so for this argument it does not matter whether $m, n$ occur in $A$ or $B$. Let us assume they occur in $B$
  \begin{enumerate}
  \item $m\in O_B\subseteq O_{A\Rightarrow B}$ and $n\in P_B\subseteq P_{A\Rightarrow B}$. When composing $\sigma$ with $\kappa_{B}$ the polarities of the moves in arena $B\Rightarrow B'$ are reversed, $m\in P_{B\Rightarrow B'}$, $n\in O_{B\Rightarrow B'}$. As usual, we use $B, B'$ to distinguish the two occurrences of arena $B$ in the composite arena $B\Rightarrow B'$. This means that $m$ is necessarily the copy-image of a $B'$ move occurring earlier and $n$ will be copied into a $B'$ move occurring later. Following the hiding of the arena $B$ the order of the moves $m$ and $n$ must necessarily stay the same.
  \item $m\in O_B\subseteq O_{A\Rightarrow B}$ and $n\in O_B\subseteq P_{A\Rightarrow B}$. When composing $\sigma$ with $\kappa_{B}$ the polarities of the moves in arena $B\Rightarrow B'$ are reversed, $m\in P_{B\Rightarrow B'}$, $n\in P_{B\Rightarrow B'}$. This means that $m$ is necessarily the copy-image of a $B'$ move occurring earlier and so is $n$. Both these moves are $O$-moves in arena $B\Rightarrow B'$, and they may occur in any order since $\kappa_B$ accepts both orders. So after hiding the arena $B$ in composition both orders of $m$, $n$ may be present in the composite strategy.
  \item All other cases are similar to the previous case (2).
  \end{enumerate}
\end{proof}

This Proposition gives in fact a rational reconstruction of the \emph{permutative saturation} condition used in asynchronous game semantics used by~\cite{DBLP:conf/concur/Laird05,DBLP:journals/apal/GhicaM08} and more broadly in semantics of asynchronous communication~\citep{Udding1986}. The reason that we can permute all sequences of moves $m\cdot n$ in which $m$ does not justify $n$, unless $m$ is an $O$-move and $n$ a $P$-move is, intuitively, that $P$ should always be able to synchronise on $O$. This is reflected, mathematically, by the fact that the saturated copy-cat $\kappa$ will always copy $O$-moves into $P$-moves. The Proposition is also useful because it will allow us to follow quite closely the definability procedure used by~\cite{DBLP:journals/apal/GhicaM08}.

It is important to emphasise at this stage that the simple game model we use here takes an `\emph{angelic}' perspective on termination. We mentioned earlier the $\sigma_{\mathit{flip}}:\mathit{Bool}$ non-deterministic strategy. Let us consider a `non-deterministic projection' strategy $\sigma_\pi:\mathit{Unit\times Unit'\to Unit''}$, $\sigma_\pi=\strat\{q''\cdot q\cdot a\cdot a'', q''\cdot q'\cdot a'\cdot a'' \}$, along with the  `\emph{diverging}' unit strategy $\sigma_\Omega=\{q\}:\mathit{Unit}$ and the non-diverging strategy at the same type $\sigma_{\mathit{nop}}=\strat\{q\cdot a\}:\mathit{Unit}$. It follows that $\langle \sigma_\Omega, \sigma_{nop}\rangle;\sigma_\pi = \sigma_{nop}$. In words, the non-deterministic choice between a responsive and a non-responsive strategy will always be the responsive strategy, hence the `angelic' moniker. More precise models can include separately the non-responsive plays, called \emph{divergences}, similar to the trace models of~\cite{DBLP:journals/jacm/BrookesHR84}, as adapted to game semantics by~\cite{DBLP:conf/lics/HarmerM99}.

Another important feature of a game model is whether it is \emph{extensional}, like that of \cite{DBLP:journals/entcs/AbramskyM96} or \emph{intensional}, like that of \cite{DBLP:journals/iandc/HylandO00}. Two strategies $\sigma_1, \sigma_2:I\to A$ are said to be \emph{equivalent} $\sigma_1\equiv \sigma_2$ if for all \emph{test strategies} $\tau:A\to \mathit{Unit}$, $\sigma_1;\tau=\sigma_2;\tau$, i.e.  $\sigma_1;\tau=\sigma_2;\tau=\sigma_\Omega$ or $\sigma_1;\tau=\sigma_2;\tau=\sigma_{nop}$.
\begin{mylem}
  Two strategies $\sigma_1, \sigma_2:I\to A$ are \emph{equivalent} $\sigma_1\equiv \sigma_2$ if and only if their saturations are equal $\sigma_1^\dag=\sigma_2^\dag$.
\end{mylem}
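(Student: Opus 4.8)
The plan is to prove the two implications separately, viewing $\sigma_1,\sigma_2$ as strategies $I\Rightarrow A$ with saturations $\sigma_i^\dag$; the backward direction is routine while the forward direction carries the real content.

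For the \emph{if} direction, I would first record that composing with a test only ever sees the \emph{saturation} of its argument. Since $I$ is a unit we have $\sigma^\dag=\sigma;\kappa_A$, and since $\kappa$ is idempotent any saturated test $\tau:A\to\mathit{Unit}$ satisfies $\kappa_A;\tau=\tau$ (because $\tau=\kappa_A;\tau;\kappa_{\mathit{Unit}}$ and $\kappa_A;\kappa_A=\kappa_A$). Using associativity, $\sigma;\tau=\sigma;(\kappa_A;\tau)=(\sigma;\kappa_A);\tau=\sigma^\dag;\tau$. Hence if $\sigma_1^\dag=\sigma_2^\dag$ then $\sigma_1;\tau=\sigma_1^\dag;\tau=\sigma_2^\dag;\tau=\sigma_2;\tau$ for every test, so $\sigma_1\equiv\sigma_2$. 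This same identity shows that testing equivalence genuinely compares only the two saturations, which is exactly what makes the converse believable.

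For the \emph{only if} direction I would argue by contraposition, constructing a single separating test. Assuming $\sigma_1^\dag\neq\sigma_2^\dag$, choose by symmetry a play $s$ of minimal length in $\sigma_1^\dag\setminus\sigma_2^\dag$ and write $s=s'\cdot m$; minimality forces $s'\in\sigma_1^\dag\cap\sigma_2^\dag$. The crucial observation is that $m$ must be a \emph{P}-move: were $m$ an $O$-move, then $s'\cdot m$ is a legal play and the $O$-closure of $\sigma_2^\dag$ would give $s'\cdot m\in\sigma_2^\dag$, a contradiction. I would then define a probe $\tau_s:A\to\mathit{Unit}$ by its next-move function (and then saturate it), which opens on the $\mathit{Unit}$-question, drives the shared arena $A$ through exactly the moves of $s$ — playing the $O_A$-moves of $s$ itself and waiting for $\sigma_i$ to supply the $P_A$-moves in the order and pointer structure prescribed by $s$ — and answers the $\mathit{Unit}$-question only once all of $s$ has been realised. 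The verification then rests on the \emph{angelic} reading of termination: against $\sigma_1^\dag$ the run that plays $s$ out and answers is a completing interaction, so $\sigma_1;\tau_s=\sigma_1^\dag;\tau_s=\sigma_{nop}$; against $\sigma_2^\dag$ no interaction can answer, since the $A$-projection of any interaction is forced by the definition of composition to be a play of $\sigma_2^\dag$, whereas $\tau_s$ answers only after an $A$-projection equal to $s$ up to renaming, which by equivariance would put $s\in\sigma_2^\dag$; concretely the interaction stalls at $s'$, where $\sigma_2^\dag$ cannot produce the $P$-move $m$. Thus $\sigma_2^\dag;\tau_s=\sigma_\Omega\neq\sigma_{nop}$ and $\tau_s$ separates $\sigma_1$ from $\sigma_2$.

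I expect the main obstacle to be precisely this last verification in the presence of the inherent non-determinism of composition and the extra plays that saturation introduces: one must rule out any re-ordered or permuted interaction that would let $\tau_s$ (or its saturation) reach the answer against $\sigma_2^\dag$. The two levers are that the first point of divergence is guaranteed to be a $P$-move, so $O$-closure cannot rescue $\sigma_2^\dag$, and that completion of $\tau_s$ pins the $A$-projection down to $s$ up to equivariance. The Synchronisation lemma (Prop.~\ref{lem:synch}) is what governs which re-orderings saturation can and cannot create, and I would lean on it to exclude spurious converging runs.
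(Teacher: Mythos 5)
Your proposal is correct and follows essentially the same route as the paper: the paper also takes a play in the symmetric difference of the saturations and composes with the separating test $\mathrm{strat}\{q\cdot p\cdot a\}$, which is exactly your probe $\tau_s$, yielding $\sigma_{nop}$ on one side and $\sigma_\Omega$ on the other. Your additions --- the minimal-counterexample choice forcing the first divergent move to be a $P$-move via $O$-closure, the explicit reduction $\sigma;\tau=\sigma^\dag;\tau$ for the easy direction, and the appeal to the Synchronisation lemma to rule out spurious saturated reorderings --- are sound refinements of the paper's (much terser) argument rather than a different method.
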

\begin{proof}
Equal strategies are equivalent, obviously. If two strategies are not equal then there is a play $p$ in their symmetric difference. Composition with $\strat\{q\cdot p\cdot a\}: A\to \mathit{Unit}$ will yield $\sigma_\Omega$ for the strategy not containing $p$ in its saturation, and $\sigma_{nop}$ for the other one. 
\end{proof}
This makes the model extensional.

Note that unlike IA or concurrent IA (ICA)~\citep{DBLP:journals/apal/GhicaM08} saturated strategies are not characterised by their `set of complete plays', i.e. those plays in which the initial question is answer. This is because strategies may contain plays such as $q\cdot a\cdot q'\cdot a'$ in which moves happen after the initial move was answered. Unlike intensionality, which makes the model harder to use directly, the fact that strategies are not characterised by complete plays is not problematic from a technical point of view.

We are now ready to address the question of definability: what syntax do we need so that any strategy, or rather its saturated version, is the denotation of some term. We will follow the definability procedure of ICA, since the two models are similar enough. The ingredients are:
\begin{description}
\item[State.] ICA has \emph{local variables}, which are used to record and test the order of execution of moves by associating each move either with writing to the state, or reading from the state. 
\item[Semaphores.] ICA also has \emph{local split binary semaphores} to achieve synchronisation between sub-plays, which can be seen as `\emph{threads}'. The type of semaphores is isomorphic to $\mathit{Unit_1\times Unit_2}$ and have strategy $\sigma_{sem}:\mathit{(Sem \Rightarrow Unit)\Rightarrow Unit'}$, $\sigma_{sem}=\strat\{q'q(q_1a_1q_2a_2)^*aa'\}$, strictly alternating between the two components of the semaphore type. Intuitively, one represents a `\emph{grab}' action, and the other a `\emph{release}'.
\item[Concurrency.] ICA has a \emph{static parallelism} strategy $\sigma_{par}:\mathit{Unit'\Rightarrow Unit''\Rightarrow Unit}$, $\sigma_{par}=\strat\{qq'q''a'a''a\}$. It interleaves its arguments in any order but it only terminates when both arguments terminate. This gives a `\emph{fork/join}' structure to all strategies, in the sense that only unanswered questions may justify, and they may only be answered when all the questions they justify have been themselves answered. This is the construct we drop, and we replace it with a simpler, dynamic concurrency strategy $\sigma_{run}:\mathit{Unit'\Rightarrow Unit}$ which allow its argument to finish before or \emph{after} the initial question is answered. In other words, $\sigma_{run}=\strat\{qq'aa'\}$. 
\end{description}

With this rather minor change, the definability procedure for ICA can be replicated, giving us a complete syntax for the model. The only possibly significant distinction between the two models is in the fact that ICA strategies are characterised by complete plays, which means that a play can be safely removed from a strategy if the initial question is not answered. This could have a potential impact on definability because we could, for example, allow several $P$-moves to occur and, subsequently, if we decide the play is not proceeding in a desired play we can simply stop playing in it. Such an artifice would not work in this model. However, the definability argument for ICA does not make use of it. 

For instance, the strategy for \emph{catch} in Sec.~\ref{sec:control} can be defined in this model, but not in the ICA model because in the case the exception is raised it breaks the fork/join discipline by providing the answer to the initial question before pending questions have been answered. However, a word of caution. Operationally, the definability argument and therefore the reconstruction of \emph{catch} creates a large number of concurrent threads, killing off those plays that evolve in an undesirable direction by introducing divergences, which are subsequently hidden by the angelic notion of observation. This reconstruction of the strategy is artificial, of little practical importance.

\section{Conclusion}

This presentation is meant primarily as a didactic exercise, presenting a game model where the plays are governed by no combinatorial restrictions such as bracketing, alternation, innocence, etc. The structure we preserve is the proper justification of plays, which we take it to have essentially a causal rationale. We also preserve certain closure conditions of strategies which also have causal or temporal motivations, while adding a new one via the Karoubi envelope construction: permutative asynchronous saturation. Behind this construction lies a technical reason, the fact that the `intuitive' copy-cat strategy does not do the job of being a unit for composition with arbitrary strategies. However, saturation is consistent with similar closure conditions tracing back to the early literature on asynchronous concurrency. Of course, saturation could have been stipulated as a required closure property of strategies, as~\cite{DBLP:journals/apal/GhicaM08} do. The choice to present it as an `add-on' property is also didactically motivated. The direct intuition behind saturation is, arguably, not as compelling as that behind, say, prefix-closure or equivariance, so there could be a risk of baffling the newcomer to game semantics. But there is also some deliberate methodological candor in showing how the game-semantic sausage is made. Game semantics is sometimes fiddly. Its power is derived from its ability to model the sometimes messy reality of programming languages. Saturation is an example of a condition that can be reconstructed by backtracking from mathematical considerations.

Methodologically, this semantics harkens back to the early days of semantic programming languages when the denotational approach was deemed sufficient. In time, operational semantics became the dominant specification paradigm, with denotational models seen more like characterisations of observational equivalence, rather than prima facie specifications. And, indeed, some of the more abstract denotational models, such as functor-category models of state~\citep{DBLP:journals/tcs/OHearnPTT99}, are far removed from the mechanics of evaluation. In contrast, game models, especially when strategies are presented via a next-move function, are close enough to evaluation to allow for quantitative modeling~\citep{DBLP:conf/popl/Ghica05} or semantics-directed compilation~\citep{DBLP:conf/popl/Ghica07}.
In fact operational semantics and game semantics are eminently compatible, since the latter can be used to give a compositional formulation to the former via so-called `\emph{system-level game semantics}'~\citep{DBLP:journals/entcs/GhicaT12}. 

Finally, Abramsky's idea of a \emph{semantic cube} is compelling and inspirational beyond definability, and beyond game semantics. In semantics of programming languages the idea of a taxonomy of behaviours is usually automatically set in the context of types. But types are largely related to the input-output (extensional) shape of computation, whereas the features that form the dimensions of the Abramsky cube concern the intensional shape of computation, and it cuts across the type discipline. Indeed, all languages we have considered in this essay, extensions of PCF, satisfy the simply-typed discipline. This deeper semantic taxonomy which we associate with the Abramsky Cube has direct relevance on equational reasoning, as pointed out for examble by~\cite{DBLP:journals/jfp/DreyerNB12}. More recent work by \cite{DBLP:journals/corr/abs-1907-01257}, which presents a model of computation based on hyper-graphs, shows how these metaphorical shapes of computation can be realised as actual shapes of graphs arising in the process of computation. These all suggest that the ideas behind Abramsky's Cube are likely to exert a lasting influence on the study of programming languages.

\end{document}